\journal{Annals of Physics}
\def\bra#1{\mathinner{\langle{#1}|}}
\def\ket#1{\mathinner{|{#1}\rangle}}
\def\coloneq{\mathrel{\mathop:}=}
\DeclareMathOperator{\tr}{Tr}
\DeclareMathOperator{\rank}{rnk}
\DeclareMathOperator{\sgn}{sgn}
\newcommand{\Id}{\mathbf{1}}
\newtheorem{theorem}{Theorem}
\newtheorem{lemma}{Lemma}
\newdefinition{remark}{Remark}
\newproof{proof}{Proof}
\newdefinition{definition}{Definition}
\begin{document}

\title{Entanglement universality of two-qubit X-states}
\author[uq]{Paulo E. M. F. Mendon\c{c}a\corref{cor1}\fnref{fn1}}
\ead{pmendonca@gmail.com}
\author[jabuca]{Marcelo A. Marchiolli}
\ead{marcelo{\_}march@bol.com.br}
\author[ift]{Di\'ogenes Galetti}
\ead{galetti@ift.unesp.br}
\address[uq]{ARC Centre for Engineered Quantum Systems, School of Mathematics and Physics, The University of Queensland, St.
Lucia, Queensland 4072, Australia}
\address[jabuca]{Avenida General Os\'orio 414, centro, 14.870-100 Jaboticabal, SP, Brazil}
\address[ift]{Instituto de F\'isica Te\'orica, Universidade Estadual Paulista, Rua Dr. Bento Teobaldo Ferraz 271, Bloco II, Barra
Funda, 01140-070 S\~{a}o Paulo, SP, Brazil}
\cortext[cor1]{Corresponding author}
\fntext[fn1]{Permanent address: Academia da For\c{c}a A\'{e}rea, C.P. 970, 13.643-970 Pirassununga, SP, Brazil}
\date{\today}

\begin{abstract}
We demonstrate that for every two-qubit state there is a \emph{X-counterpart}, i.e., a corresponding two-qubit X-state of same spectrum and entanglement, as
measured by concurrence, negativity or relative entropy of entanglement. By parametrizing the set of two-qubit X-states and a
family of unitary transformations that preserve the sparse structure of a two-qubit X-state density matrix, we obtain the parametric form of a unitary transformation that
converts arbitrary two-qubit states into their X-counterparts. Moreover, we provide a semi-analytic prescription on how to set the
parameters of this unitary transformation in order to preserve concurrence or negativity. We also explicitly construct a set of
X-state density matrices, parametrized by their purity and concurrence, whose elements are in one-to-one correspondence with the points
of the concurrence versus purity (CP) diagram for generic two-qubit states.
\end{abstract}

\begin{keyword}
Entanglement \sep Concurrence \sep Negativity \sep Relative Entropy of Entanglement \sep X-states
\PACS 03.65.Ud \sep 03.67.Mn \sep 03.65.Aa
\end{keyword}
\maketitle

\section{Introduction}

Despite our limited understanding of what entanglement is at the most fundamental level, many tasks that feature entanglement as a \emph{sine qua non} condition have been successfully performed thanks to our ever-growing ability to manipulate quantum systems comprised of interacting subsystems~\cite{91Ekert661,07Ursin481,92Bennett2881,08Barreiro282,93Bennett1895,12Ma269,97Shor1484,12Lopez773}. Ultimately, entanglement is an attribute of quantum states and, as such, practical applications will unavoidably rely upon one's ability to prepare certain density matrices. In practice, though, depending on the details of a particular implementation and on the types of noise that affect the relevant quantum system, some entangled states may turn out to be very hard to produce, thus limiting the entanglement available to practical applications. As a result, it is natural to ask: how much entanglement is left as we avoid certain density matrices?

In this paper this question is approached in the context of two-qubit states and with a clear specification as to which states are to be avoided. Surprisingly, we find that no entanglement (as quantified by three entanglement measures and with respect to a fixed level of mixedness) is lost as we avoid every two-qubit density matrix, but the sparse family that, in the computational basis, can display non-zero entries only along the main- and anti-diagonals; the so-called X-states~\cite{07Yu459}. For entanglement measures we consider concurrence, negativity and relative entropy of entanglement, in terms of which our main result acquires its more precise expression:  \emph{for every two-qubit state with a value of entanglement set by any of these measures, there is a corresponding X-state of same spectrum and same entanglement}.

Two-qubit X-states generalize many renowned families of entangled two-qubit states, for example, Bell states~\cite{00Nielsen}, Werner states~\cite{89Werner4277}, isotropic states~\cite{99Horodecki4206} and maximally entangled mixed states~\cite{00Ishizaka22310,01Verstraete12316,01Munro30302,03Wei22110}. They were first identified as a class of states of interest in the work of Yu and Eberly~\cite{07Yu459}, where some of their properties in connection with the phenomenon of sudden death of entanglement were investigated. Ever since, the interest in X-states exceeded its original motivation and has been manifested in many other contexts~\cite{05Retzker050504,09Rau412002,12Quesada1322,13Hedemann,13Costa}. Particularly relevant for this paper is the work of Hedemann~\cite{13Hedemann}, who provided compelling numerical evidence that the set of two-qubit X-states alone is sufficient to access every possible combination of concurrence and purity available to two-qubit states, and conjectured that any generic two-qubit state can be converted into a X-state via a unitary transformation that preserves concurrence. Besides proving Hedemann's conjecture, we demonstrate that it also holds true when entanglement is quantified with negativity or relative entropy of entanglement instead of concurrence.

Also closely related to our purposes is the work of Verstraete~\emph{et al.}~\cite{01Verstraete12316}, where it was shown that, for a fixed set of eigenvalues, the states of maximal concurrence, negativity or relative entropy of entanglement are the same X-states, thus establishing the top frontier of the relevant entanglement versus mixedness diagrams as comprised by X-states. Our main result extends theirs in implying that X-states not only border such diagrams, but can be put in a many-to-one correspondence with every internal point. 

From a pragmatic viewpoint, the interest in this universality property of X-states relies upon their inherent easiness of manipulation, both theoretical and experimental. Owing to the highly sparse form of X-state density matrices written in the computational basis (X-density matrices, for short), a great deal of symbolic computations is possible, even in the context of entanglement quantification where one is usually forced to resort to numerical approaches. The possibility of replacing generic two-qubit density matrices with X-density matrices is a promising route toward a deeper understanding of mixed-state entanglement. On the experimental side, two-qubit X-states can be produced and evolved, for example, with standard interactions arising in the context of nuclear magnetic resonance~\cite{09Rau412002,00Rau032301} and with variations of available technology for generating Werner states in optical and atomic implementations~\cite{02Zhang062315,04Barbieri177901,04Cinelli022321,04Peters133601,06Agarwal022315,13Jin2830}.

Throughout, aiming to take full advantage of the highly sparse form of two-qubit X-density matrices, we exploit the luxury of working in a constructive-analytic fashion. Largely, this is enabled by the introduction of a simple parametrization on the set of X-states, which leads to a geometric visualization of separable, entangled and rank-specific X-states in the relevant parameter space. Thanks to this, we are able to explicitly construct a set of two-qubit X-states that can be put in a one-to-one correspondence with the points of the CP-diagram for generic two-qubit states. Most importantly, we parametrize a unitary transformation that maps an arbitrary two-qubit state into a X-state of same entanglement (according to any one of the three considered measures), and show how to set the parameter values to achieve conservation of concurrence or negativity.

Our paper is structured as follows. In order to obtain the constructions that form the core of our work, in Sec.~\ref{sec:parametrize} we parametrize separable, entangled and rank-specific two-qubit X-density matrices. Our parametrizations are first put into use in Sec.~\ref{sec:minimalset}, where we explicitly construct a minimal set of X-states that exhausts the two-qubit CP-diagram. In Sec.~\ref{sec:universality} our main universality result is established by showing that every X-state can be disentangled with a unitary transformation that preserves the sparse structure of a two-qubit X-density matrix (Sec.~\ref{sec:disentanglement}) and that our selected entanglement measures vary continuously during the disentangling process (Sec.~\ref{sec:continuity}). We summarize our main results and discuss some possible avenues for future research in Sec.~\ref{sec:concludingremarks}.

%%%%%%%%%%%%%%%%%%%%%%%%%%%%%%%%%%%%%%%%%%%%%%%%%%%%%%%%%%%%%%%%%%%%%%%%%%%%%%%%%%%%%%%%%%%%%%%%%%%%%%%%%%%%%%%%%%%%%%%%%%%%%%%%%
\section{Parametrizing two-qubit X-states}\label{sec:parametrize}
%%%%%%%%%%%%%%%%%%%%%%%%%%%%%%%%%%%%%%%%%%%%%%%%%%%%%%%%%%%%%%%%%%%%%%%%%%%%%%%%%%%%%%%%%%%%%%%%%%%%%%%%%%%%%%%%%%%%%%%%%%%%%%%%%

Two-qubit X-states are quantum states of a four-dimensional Hilbert space that do not mix the subspaces $S_{1} = {\rm Span} (\ket{00}, \ket{11})$ and $S_{2} = {\rm Span} (\ket{01}, \ket{10})$. In the computational basis $\{\ket{00}, \ket{01}, \ket{10}, \ket{11}\}$, they assume the matrix form

\begin{equation}
\label{eq:Xstates_param}
\left[ \begin{array}{cccc}
\cos^{2}{\theta} & \cdot & \cdot & \sqrt{x}\, {\rm e}^{i \mu} \\
\cdot & \sin^{2}{\theta} \cos^{2}{\varphi} & \sqrt{y}\, {\rm e}^{i \nu} & \cdot \\
\cdot & \sqrt{y}\, {\rm e}^{-i \nu} & \sin^{2}{\theta} \sin^{2}{\varphi} \cos^{2}{\psi} & \cdot \\
\sqrt{x}\, {\rm e}^{-i \mu} & \cdot & \cdot & \sin^{2}{\theta} \sin^{2}{\varphi} \sin^{2}{\psi}
\end{array} \right]
\end{equation}
with $\theta,\varphi,\psi \in [0,\pi/2]$, $x,y \geq 0$ and  $\mu,\nu \in [0,2\pi]$. In order to highlight the resemblance of matrix~(\ref{eq:Xstates_param}) with the alphabet letter `X' (which justifies the nomenclature ``X-state''), we replace every vanishing entry of a matrix with a dot. Throughout, every density matrix of the form~(\ref{eq:Xstates_param}) is referred to as a \emph{X-density matrix}. More generally, every matrix possessing non-zero terms only along the main- and anti-diagonals is said to be of the \emph{X-form}.

That any X-density matrix has the form (\ref{eq:Xstates_param}) is a direct consequence of the fact that, apart from the
decoupling between $S_{1}$ and $S_{2}$, all the inbuilt constraints of (\ref{eq:Xstates_param}) are \emph{necessary} features 
of a density matrix: the parametrization along the main diagonal establishes only normalization and non-negativity of the 
diagonal entries, whereas the parametrization along the anti-diagonal establishes only Hermiticity. 

However, not every matrix of the form (\ref{eq:Xstates_param}) with $\theta,\varphi,\psi \in [0,\pi/2]$, $x,y \geq 0$ and  
$\mu,\nu \in [0,2\pi]$ is a density matrix. In what follows we show how to further constrain the ranges of $x$ and $y$ in order 
to make the set of matrices of the form (\ref{eq:Xstates_param}) with the corresponding parameter ranges to \emph{coincide} with 
the set of (i) X-density matrices, (ii) X-density matrices of a fixed rank and (iii) separable X-density matrices.

%%%%%%%%%%%%%%%%%%%%%%%%%%%%%%%%%%%%%%%%%%%%%%%%%%%%%%%%%%%%%%%%%%%%%%%%%%%%%%%%%%%%%%%%%%%%%%%%%%%%%%%%%%%%%%%%%%%%%%%%%%%%%%%%%%
\subsection{Parametrizing two-qubit X-density matrices}\label{sec:param_xdensmat}
%%%%%%%%%%%%%%%%%%%%%%%%%%%%%%%%%%%%%%%%%%%%%%%%%%%%%%%%%%%%%%%%%%%%%%%%%%%%%%%%%%%%%%%%%%%%%%%%%%%%%%%%%%%%%%%%%%%%%%%%%%%%%%%%%%

The set of two-qubit X-density matrices is equal to the subset of matrices of the form~(\ref{eq:Xstates_param}) with parameter
values that render it positive semidefinite. For that, we start by considering the characteristic equation for
(\ref{eq:Xstates_param}):
\begin{equation}
\label{eq:chareq}
\lambda^{4} - \mathfrak{a}_{1} \lambda^{3} + \mathfrak{a}_{2} \lambda^{2} - \mathfrak{a}_{3} \lambda + \mathfrak{a}_{4} = 0 \, ,
\end{equation}
where
\begin{equation}
\begin{array}{rclrcl}
\mathfrak{a}_{1} &=& 1 \, ,  & \qquad \mathfrak{a}_{3} &=& \mathcal{BH} + \mathcal{CG} - x \mathcal{B} - y \mathcal{C} \, , \\
\mathfrak{a}_{2} &=& \mathcal{BC} + \mathcal{G} + \mathcal{H} - x - y \, , & \mathfrak{a}_{4} &=& \mathcal{HG} - y \mathcal{H} -
x \mathcal{G} + xy \, .
\end{array}
\end{equation}
In the above, the calligraphic letters $\mathcal{B}$, $\mathcal{C}$, $\mathcal{G}$ and $\mathcal{H}$ are functions of the diagonal
parameters $\theta$, $\varphi$ and $\psi$. In fact, $\mathcal{C}$ and $\mathcal{B}$ ($\mathcal{H}$ and $\mathcal{G}$) give the 
sum (product) of the diagonal entries of the unnormalized density matrices of the `fictitious qubits' living in the subspaces
$S_{1}$ and $S_{2}$, respectively. Explicitly, $\mathcal{C} \coloneq 1 - \mathcal{B}$, 

\begin{equation}\label{eq:calBCGH}
\mathcal{B} \coloneq \sin^{2}{\theta} ( 1 - \sin^{2}{\varphi} \sin^{2}{\psi} ) \, ,\quad \mathcal{G} \coloneq \sin^{4}{\theta} \sin^{2}{\varphi} \cos^{2}{\varphi} \cos^{2}{\psi}\quad\mbox{and}\quad \mathcal{H} \coloneq \sin^{2}{\theta} \cos^{2}{\theta} \sin^{2}{\varphi} \sin^{2}{\psi} \,.
\end{equation}

Since the positive semidefiniteness of (\ref{eq:Xstates_param}) is equivalent to the set of inequalities 
$\{ \mathfrak{a}_{i} \geq 0 \}_{i=1,\dots,4}$~\cite{09Bernstein}, we are left with three nonvacuous inequalities
\begin{subequations}
\label{eq:physical}
\begin{align}
\label{eq:physical1}
\mathcal{BC} + (\mathcal{H}-x) + (\mathcal{G}-y) & \geq 0 \, , \\
\label{eq:physical2}
\mathcal{B}(\mathcal{H}-x) + \mathcal{C}(\mathcal{G}-y) & \geq 0 \, , \\
\label{eq:physical3}
(\mathcal{H}-x)(\mathcal{G}-y) & \geq 0 \, .
\end{align}
\end{subequations}
Now, due to the non-negativity of $\mathcal{B}$ and $\mathcal{C}$, it is clear that the inequalities above are simultaneously
satisfied if and only if
\begin{equation}
\label{eq:rangesxy}
x \in [0,\mathcal{H}] \quad \mbox{and} \quad y \in [0,\mathcal{G}] \, ,
\end{equation}
which summarize necessary and sufficient conditions for the positive semidefiniteness of the form (\ref{eq:Xstates_param}).
Therefore, the set of these matrices with $\theta, \varphi, \psi \in [0,\pi/2]$, $\mu, \nu \in [0,2\pi]$, $x \in [0,\mathcal{H}]$
and $y \in [0,\mathcal{G}]$ fully characterizes the set of two-qubit X-density matrices. 
As we shall see next, such a parametrization enables an appealing geometric visualization of two-qubit X-states and can be easily specialized to parametrize separable and fixed-rank two-qubit X-states.

%%%%%%%%%%%%%%%%%%%%%%%%%%%%%%%%%%%%%%%%%%%%%%%%%%%%%%%%%%%%%%%%%%%%%%%%%%%%%%%%%%%%%%%%%%%%%%%%%%%%%%%%%%%%%%%%%%%%%%%%%%%%%%%%%%
\subsection{Parametrizing two-qubit X-density matrices of a fixed rank}
%%%%%%%%%%%%%%%%%%%%%%%%%%%%%%%%%%%%%%%%%%%%%%%%%%%%%%%%%%%%%%%%%%%%%%%%%%%%%%%%%%%%%%%%%%%%%%%%%%%%%%%%%%%%%%%%%%%%%%%%%%%%%%%%%%

According to the Newton-Girard formulae~\cite{09Bernstein}, the coefficients $\mathfrak{a}_{i}$ of the characteristic equation (\ref{eq:chareq}) are the sum of all products of $i$ eigenvalues of matrix (\ref{eq:Xstates_param}). This observation
can be used to parametrize two-qubit X-density matrices with a fixed rank.

\begin{description}

\item[Rank-1:] Three zero eigenvalues impose $\mathfrak{a}_{2} = \mathfrak{a}_{3} = \mathfrak{a}_{4} = 0$ or, equivalently, the
saturation of Eqs. \eqref{eq:physical1}-\eqref{eq:physical3}. Clearly, this occurs if and only if
$x = \mathcal{H}$, $y = \mathcal{G}$ and $\mathcal{BC} = 0$, which can be recast as the logical disjunction:\footnote{The logical
equivalence between $( x = \mathcal{H}, y = \mathcal{G}, \mathcal{BC} = 0 )$ and (\ref{eq:rank1_params}) is established by the
easy-to-check implications: $\mathcal{B} = 0 \Rightarrow \mathcal{G} = 0$ and $\mathcal{C} = 0 \Rightarrow \mathcal{H} = 0$.}
\begin{equation}
\label{eq:rank1_params}
\left( x = \mathcal{H}, y = 0, \mathcal{B} = 0 \right) \quad \mbox{or} \quad \left( x = 0, y = \mathcal{G}, \mathcal{C} = 0
\right) \, .
\end{equation}

\item[Rank-2:] Two zero eigenvalues imply in $\mathfrak{a}_{2} > 0$ and $\mathfrak{a}_{3} = \mathfrak{a}_{4} = 0$, which means
that Eqs. \eqref{eq:physical2} and \eqref{eq:physical3} must be saturated, whereas \eqref{eq:physical1} must not. In this case,
some simple analysis shows that the following logical disjunction comprises all the possibilities:
\begin{equation}
\label{eq:rank2_params}
\left( x < \mathcal{H}, y=0, \mathcal{B} = 0 \right) \quad \mbox{or} \quad \left( x = 0, y < \mathcal{G}, \mathcal{C} = 0 \right)
\quad \mbox{or} \quad \left( x = \mathcal{H}, y = \mathcal{G}, \mathcal{BC} > 0 \right)\,. 
\end{equation}

\item[Rank-3:] The single zero eigenvalue imposes $\mathfrak{a}_{2} > 0$, $\mathfrak{a}_{3} > 0$ and $\mathfrak{a}_{4} = 0$, which
implies in the sole saturation of inequality \eqref{eq:physical3} or, equivalently,
\begin{equation}
\label{eq:rank3_params}
\left( x < \mathcal{H}, y = \mathcal{G}, \mathcal{B} > 0 \right) \quad \mbox{or} \quad \left( x = \mathcal{H}, y < \mathcal{G},
\mathcal{C} > 0 \right) \, .
\end{equation}

\item[Rank-4:] The absence of zero eigenvalues produces $\mathfrak{a}_{2} > 0$, $\mathfrak{a}_{3} > 0$ and $\mathfrak{a}_{4} > 0$,
which amounts to be the same as preventing saturation of inequalities \eqref{eq:physical1}-\eqref{eq:physical3}. This is equivalent to require
\begin{equation}
\label{eq:rank4_params}
\left( x < \mathcal{H}, y < \mathcal{G}, \mathcal{BC} > 0 \right) \, .
\end{equation}
\end{description}

It is thus clear that the set of two-qubit X-density matrices of a fixed rank is equivalent to the set of matrices (\ref{eq:Xstates_param}) with parameter values verifying the corresponding constraint specified above. Throughout, we shall refer to each
alternative of rank-specific parameter choice as a \emph{kind} of X-state. Accordingly, there are two kinds of rank-1 and rank-3
X-states, three kinds of rank-2 X-states and a single kind of rank-4 X-states.

%%%%%%%%%%%%%%%%%%%%%%%%%%%%%%%%%%%%%%%%%%%%%%%%%%%%%%%%%%%%%%%%%%%%%%%%%%%%%%%%%%%%%%%%%%%%%%%%%%%%%%%%%%%%%%%%%%%%%%%%%%%%%%%%%%
\subsection{Parametrizing two-qubit separable X-density matrices}\label{sec:separable}
%%%%%%%%%%%%%%%%%%%%%%%%%%%%%%%%%%%%%%%%%%%%%%%%%%%%%%%%%%%%%%%%%%%%%%%%%%%%%%%%%%%%%%%%%%%%%%%%%%%%%%%%%%%%%%%%%%%%%%%%%%%%%%%%%%

According to the PPT criterion \cite{96Peres1413,96Horodecki1}, the set of two-qubit X-density matrices is equal to the set of
matrices of the form (\ref{eq:Xstates_param}) with parameter values that render itself and its partial transpose positive
semidefinite. In Sec. \ref{sec:param_xdensmat}, we have seen how the positive semidefiniteness of (\ref{eq:Xstates_param})
constrains $x$ and $y$ [cf. Eq. (\ref{eq:rangesxy})]. In this section, we find analogous constraints for the positive
semidefiniteness of the partial transpose of (\ref{eq:Xstates_param}).

It suffices to consider the partial transpose over one of the two subsystems, which we choose to be the second. In that case, the
partial transpose operation over (\ref{eq:Xstates_param}) yields a matrix of the same form, but with $x$ and $y$ (and also $\mu$
and $\nu$) swapped over, which implies that the positive semidefiniteness of the partially transposed matrix is guaranteed by 
the constraints (\ref{eq:rangesxy}) with $x \in [0,\mathcal{G}]$ and $y \in [0,\mathcal{H}]$. Clearly, in order to have both
(\ref{eq:Xstates_param}) and its partial transpose positive semidefinite, $x$ and $y$ must be chosen according to
\begin{equation}
\label{eq:sep_params}
x \in [0,\min(\mathcal{G},\mathcal{H})] \quad \mbox{and} \quad y \in [0,\min(\mathcal{G},\mathcal{H})] \, .
\end{equation}
Therefore, the set of separable two-qubit X-density matrices is identical to the set of matrices of the form
(\ref{eq:Xstates_param}) with parameter values that verify (\ref{eq:sep_params}).

The results of this section are all summarized in Fig. \ref{fig:Xparameterspace}, which represent X-states with a fixed value of
$\mathcal{H} + \mathcal{G} + \mathcal{BC} = s$ in a $xy$ parameter space. Each plot corresponds to a different contribution of the
parameters $\mathcal{H}$, $\mathcal{G}$ and $\mathcal{BC}$ toward $s$, in such a way that separable and entangled X-states of all
ranks and kinds can be visualized as vertices, sides and interior of a rectangle of side lengths $\mathcal{G}$ and
$\mathcal{H}$. Although $s$ can assume any real value between $0$ and $3/8$, the figure conveys only cases with $s \in\; ]0,1/4]$.
If $s = 0$ then $\mathcal{H} = \mathcal{G} = \mathcal{BC} = 0$, which implies the collapse of the rectangle to the origin of the
parameter space. In this particular case, the resulting states are all rank-1 or rank-2 separable X-states of the forms 
$\ket{i} \! \bra{i} \otimes (\cos^{2}{\vartheta} \ket{0} \! \bra{0} + \sin^{2}{\vartheta} \ket{1} \! \bra{1})$ or $(\cos^{2}{\vartheta}
\ket{0} \! \bra{0} + \sin^{2}{\vartheta} \ket{1} \! \bra{1}) \otimes \ket{i} \! \bra{i}$, for $i \in \{0,1\}$ and $\vartheta \in
[0,\pi/2]$. If $s \in\; ]1/4,3/8]$, it is no longer possible to establish $\mathcal{H} = s$ or $\mathcal{G} = s$, hence no pure
states can occur.
%%%%%%%%%%%%%%%%%%%%%%%%%%%%%%%%%%%%%%%%%%%%%%%%%%%%%%%%%%%%%%%%%%%%%%%%%%%%%%%%%%%%%%%%%%%%%%%%%%%%%%%%%%%%%%%%%%%%%%%%%%%%%%%%%%
\begin{figure}[h]
\centering
\includegraphics{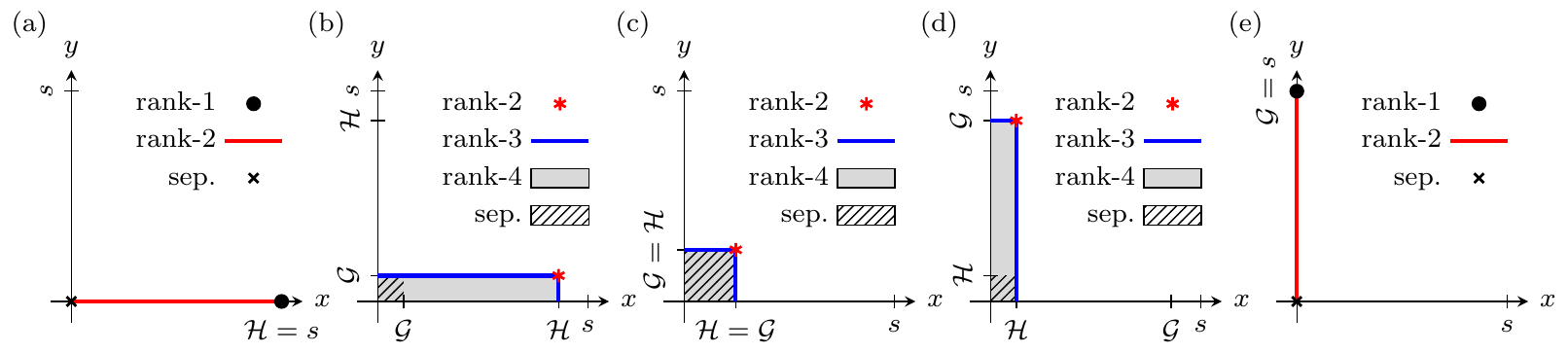}
\caption{Geometric visualization, in the $xy$ parameter space, of classes of X-states with $\mathcal{H} + \mathcal{G} +
\mathcal{BC} = s$, where $s$ is a fixed parameter that sets the scale. The plots are ordered in such a way that $\mathcal{G}$
increases from $0$ to $s$ as one moves from the left to the right. Plot (a) corresponds to a choice of $\theta$, $\varphi$ and
$\psi$ such that $\mathcal{G} = \mathcal{BC} = 0$ and $\mathcal{H} = s$ (e.g., $\theta = \pi/4$, $\varphi = \psi = \pi/2$ and
$s = 1/4$), in such a way that the only possible X-states distribute along the line segment $y = 0$ and $0 \leq x \leq
\mathcal{H}$. Rank-1 X-states of the first kind occupy the end point $(\mathcal{H},0)$, separable rank-2 X-states of the first
kind occupy the origin $(0,0)$, entangled rank-2 X-states of the first kind occupy the points in between. Plot (b) corresponds to
a choice of $\theta$, $\varphi$ and $\psi$ such that $0 < \mathcal{G} < \mathcal{H} < s$ and $\mathcal{BC} > 0$ (e.g., $\theta
\approx 0.598$,  $\varphi \approx 0.962$, $\psi \approx 0.800$ and $s = 1/4$), in such a way that the only possible X-states
distribute along the rectangle of width $\mathcal{H}$ and height $\mathcal{G}$. Rank-2 X-states of the third kind occupy the
vertex $(\mathcal{H},\mathcal{G})$, rank-3 X-states of the first kind occupy the line segment $y = \mathcal{G}$ and $0 \leq x <
\mathcal{H}$, rank-3 X-states of the second kind occupy the line segment $x = \mathcal{H}$ and $0 \leq y < \mathcal{G}$, rank-4 
X-states occupy the interior $0 \leq x < \mathcal{H}$ and $0 \leq y < \mathcal{G}$. States inside (outside) the hatched square
$0 \leq x,y \leq \mathcal{G}$ are separable (entangled). Plot (c) corresponds to a choice of $\theta$, $\varphi$ and $\psi$ such
that $0 < \mathcal{G} = \mathcal{H} < s$ and $\mathcal{BC} > 0$ (e.g., $\theta \approx 0.606$, $\varphi \approx 0.834$, $\psi
\approx 0.436$ and $s = 1/4$), in such a way that the only possible X-states distribute along the square of edge lengths
$\mathcal{H} = \mathcal{G}$. All such states are separable. Plot (d) is analogous to plot (b), but with $\mathcal{G}$ and
$\mathcal{H}$ interchanged. Plot (e) is analogous to plot (a) with $\mathcal{G}$ and $\mathcal{H}$ interchanged. Moreover, in 
(e) both rank-1 and rank-2 states are of the second kind.}\label{fig:Xparameterspace}
\end{figure}
%%%%%%%%%%%%%%%%%%%%%%%%%%%%%%%%%%%%%%%%%%%%%%%%%%%%%%%%%%%%%%%%%%%%%%%%%%%%%%%%%%%%%%%%%%%%%%%%%%%%%%%%%%%%%%%%%%%%%%%%%%%%%%%%%%

%%%%%%%%%%%%%%%%%%%%%%%%%%%%%%%%%%%%%%%%%%%%%%%%%%%%%%%%%%%%%%%%%%%%%%%%%%%%%%%%%%%%%%%%%%%%%%%%%%%%%%%%%%%%%%%%%%%%%%%%%%%%%%%%%%
\section{Minimal set of X-states for full occupancy of the two-qubit CP-diagram}\label{sec:minimalset}
%%%%%%%%%%%%%%%%%%%%%%%%%%%%%%%%%%%%%%%%%%%%%%%%%%%%%%%%%%%%%%%%%%%%%%%%%%%%%%%%%%%%%%%%%%%%%%%%%%%%%%%%%%%%%%%%%%%%%%%%%%%%%%%%%%

As a first application of the parametrizations obtained in the previous section, we now present a construction of a minimal set of
X-states that fully occupy the entangled region of the CP-diagram of generic two-qubit states. We refer to it as a \emph{minimal}
set because its elements are in a one-to-one correspondence with the points of the CP-diagram, in such a way that if a single
state is removed from the set, a point of the CP-diagram is consequently missed. For the reader's convenience, in
\ref{app:entmeasures} we briefly review some basic aspects of the entanglement measure concurrence, and in~\ref{app:cpdiagrams} the boundaries of the two-qubit CP-diagram are explicitly obtained. For more information on two-qubit entanglement versus mixedness diagrams, we refer the reader to Refs.~\cite{01Munro30302,13Hedemann,05Ziman52325,03Wei22110}.

To present our construction we divide the CP-diagram in three disjoint purity domains whose union equals the interval $[1/3,1]$,
where all two-qubit entangled states live \cite{98Zyczkowski883}. For each of these purity subdomains we prove a theorem whose
statement provides parameter values, as functions of the desired purity and concurrence values, that produce a family of X-states
of fixed rank that exhausts the corresponding CP-region. Before stating and proving the theorems, let us briefly outline the
procedure by which the proposed parameter values were obtained. 

An arbitrary X-state $\bm{\varrho}$, parametrized as in (\ref{eq:Xstates_param}), has its purity and concurrence given by the
following formulae:\footnote{While the purity formula follows by direct evaluation of $\tr[\bm{\varrho}^{2}]$ for $\bm{\varrho}$
given by Eq. (\ref{eq:Xstates_param}), the concurrence formula can be easily obtained from a useful specialization, due to Wang
and coworkers \cite{06Wang4343}, of the standard concurrence formula \cite{98Wootters2245} for arbitrary two-qubit states to the
case of two-qubit X-states --- see \ref{app:entmeasures}, in particular Eq. (\ref{eq:YuEberlyConcurrence}).}
\begin{align}
\label{eq:purity}
P(\bm{\varrho}) & = 1 - 2 (\mathcal{BC} + \mathcal{G} - y + \mathcal{H} - x) , \\
\label{eq:concurrence}
C(\bm{\varrho}) & = 2 \max \left[0, \sqrt{x} - \sqrt{\mathcal{G}}, \sqrt{y} - \sqrt{\mathcal{H}} \right] ,
\end{align}
which can be specialized to give the purity and concurrence of X-states of a fixed rank by restricting their parameters according 
to the constraints (\ref{eq:rank1_params}) to (\ref{eq:rank4_params}). From the resulting purity equation for each rank we can
eliminate one of the X-state parameters in favor of $P$ and, hence, rewrite $C$ as a function of $P$ and the remaining X-state
parameters. Then, fixing $C=c$ and $P=p$, with $p$ and $c$ representing any possible values of purity and concurrence for the
specific rank and the relevant purity subdomain, we obtain a transcendental equation that can be solved for the X-state
parameters.
 
Although the constructions presented in the following theorems were obtained by solving such transcendental equations, we refrain
from presenting the constructive steps that led to them. Instead, we state the obtained parameters values in terms of $p$ and $c$
and prove that, for any possible pair $(p,c)$, they: (i) give origin to valid X-density matrices of a given rank and (ii) solve
the equations $P=p$ and $C=c$. 

%%%%%%%%%%%%%%%%%%%%%%%%%%%%%%%%%%%%%%%%%%%%%%%%%%%%%%%%%%%%%%%%%%%%%%%%%%%%%%%%%%%%%%%%%%%%%%%%%%%%%%%%%%%%%%%%%%%%%%%%%%%%%%%%%%
\begin{theorem}
For every generic rank-1 state of concurrence $c$, a rank-1 X-state of same concurrence can be constructed from equation
\eqref{eq:Xstates_param} by taking
\begin{equation} 
\label{eq:rank1_kind1_parameters}
\theta = \frac{1}{2} \arcsin (c) , \quad \varphi = \psi = \frac{\pi}{2} , \quad x = \frac{c^{2}}{4} \quad \mbox{and} \quad
y = \mu = \nu =0 \, .
\end{equation}
\end{theorem}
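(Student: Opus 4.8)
The plan is a direct verification along the two lines announced before the theorem: first confirm that the prescribed parameters yield a legitimate rank-1 X-density matrix, and then check that its concurrence equals $c$ (the purity being automatically $1$ for any rank-1 state).

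First I would substitute $\varphi=\psi=\pi/2$ into the definitions~\eqref{eq:calBCGH} of the auxiliary quantities. Because $\cos^{2}\varphi=0$ and $\sin^{2}\varphi=\sin^{2}\psi=1$, they collapse to $\mathcal{B}=0$, hence $\mathcal{C}=1$, together with $\mathcal{G}=0$ and $\mathcal{H}=\sin^{2}\theta\cos^{2}\theta$. The only nontrivial identification is then $x=\mathcal{H}$: using $\sin(2\theta)=2\sin\theta\cos\theta$ and the choice $\theta=\tfrac{1}{2}\arcsin(c)$, which gives $\sin(2\theta)=c$, one obtains $\mathcal{H}=\tfrac{1}{4}\sin^{2}(2\theta)=c^{2}/4=x$. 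I would also record the elementary range checks: for $c\in[0,1]$ one has $\theta=\tfrac{1}{2}\arcsin(c)\in[0,\pi/4]\subset[0,\pi/2]$ and $x=c^{2}/4\ge 0$, so the parameters are admissible.

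With $x=\mathcal{H}$, $y=0$ and $\mathcal{B}=0$, the triple matches exactly the first disjunct of the rank-1 condition~\eqref{eq:rank1_params}, so the constructed matrix is a valid rank-1 X-density matrix; this settles claim (i). For claim (ii), I would feed the same values into the concurrence formula~\eqref{eq:concurrence}. Since $\mathcal{G}=0$ and $y=0$, the three arguments of the maximum become $0$, $\sqrt{x}$ and $-\sqrt{\mathcal{H}}$; the last is nonpositive while $\sqrt{x}\ge 0$, so the maximum equals $\sqrt{x}$ and $C(\bm{\varrho})=2\sqrt{x}=2\sqrt{c^{2}/4}=c$. Inserting the same values into~\eqref{eq:purity} returns $P(\bm{\varrho})=1$, as it must for a rank-1 state.

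There is no genuine analytic obstacle here: the whole argument is substitution plus one double-angle identity. The step most worth stating carefully is the identification $\sqrt{\mathcal{H}}=\sqrt{x}=c/2$, since it is precisely the coincidence that makes the single parameter $\theta$ do double duty, simultaneously pushing the state onto the rank-1 boundary $x=\mathcal{H}$ and fixing its concurrence at the prescribed value $c$. For completeness I would also remark that, because every rank-1 two-qubit state is pure with concurrence in $[0,1]$, the assignment $c\mapsto\tfrac{1}{2}\arcsin(c)$ reaches every admissible target, so the construction covers all generic rank-1 states.
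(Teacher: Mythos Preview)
Your proposal is correct and follows essentially the same approach as the paper: compute $\mathcal{B}$, $\mathcal{G}$, $\mathcal{H}$ from the prescribed parameters, verify that they satisfy the first disjunct of the rank-1 condition~\eqref{eq:rank1_params}, and then evaluate the concurrence formula~\eqref{eq:concurrence} to obtain $C=c$. Your write-up is in fact slightly more explicit than the paper's (you spell out the double-angle identity giving $\mathcal{H}=c^{2}/4$, check the parameter ranges, and verify $P=1$), but the logical structure is identical.
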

%%%%%%%%%%%%%%%%%%%%%%%%%%%%%%%%%%%%%%%%%%%%%%%%%%%%%%%%%%%%%%%%%%%%%%%%%%%%%%%%%%%%%%%%%%%%%%%%%%%%%%%%%%%%%%%%%%%%%%%%%%%%%%%%%%
\begin{proof} 
We start by showing that for every $c \in [0,1]$ the choice of parameters of \eqref{eq:rank1_kind1_parameters} yields a valid
rank-1 X-state. In fact, using (\ref{eq:rank1_kind1_parameters}) to compute the coefficients $\mathcal{B}$, $\mathcal{G}$ and
$\mathcal{H}$ gives
\begin{equation}
\mathcal{B} = \mathcal{G} = 0 \quad \mbox{and} \quad \mathcal{H} = \frac{c^{2}}{4}
\end{equation}
which complies with $x = \mathcal{H}$, $y = \mathcal{G}$, $\mathcal{B}=0$ and, hence, characterizes the resulting states as 
rank-1 X-states of the first kind. Finally, substituting \eqref{eq:rank1_kind1_parameters} in the concurrence formula
\eqref{eq:concurrence} we obtain $C = c$.
\begin{flushright}
$\Box$
\end{flushright}
\end{proof}
%%%%%%%%%%%%%%%%%%%%%%%%%%%%%%%%%%%%%%%%%%%%%%%%%%%%%%%%%%%%%%%%%%%%%%%%%%%%%%%%%%%%%%%%%%%%%%%%%%%%%%%%%%%%%%%%%%%%%%%%%%%%%%%%%%

Note that although the choice of parameters \eqref{eq:rank1_kind1_parameters} leads to rank-1 X-states of the first kind, rank-1
X-states of the second kind can also access every $c \in [0,1]$: this is achieved with
\begin{equation}
\theta = \frac{\pi}{2} , \quad \varphi = \frac{1}{2} \arcsin (c) , \quad \psi = x = 0 , \quad y = \frac{c^{2}}{4} \quad \mbox{and}
\quad \mu = \nu =0 \, .
\end{equation}
A proof of this assertion follows the same steps presented above and will be omitted.

%%%%%%%%%%%%%%%%%%%%%%%%%%%%%%%%%%%%%%%%%%%%%%%%%%%%%%%%%%%%%%%%%%%%%%%%%%%%%%%%%%%%%%%%%%%%%%%%%%%%%%%%%%%%%%%%%%%%%%%%%%%%%%%%%% 
\begin{theorem}
\label{thm:rank2}
For every generic state of concurrence $c$ and purity $p \in [5/9,1[$, rank-2 X-states of same concurrence and purity can be
constructed from equation \eqref{eq:Xstates_param} by taking
\begin{equation}
\label{eq:rank2_kind3_parameters}
\theta = \arcsin \left( \sqrt{u} \right) , \quad \varphi = \frac{1}{2} \arcsin \left( \frac{c}{u} \right) , \quad \psi = x = 0 ,
\quad y = \frac{c^{2}}{4} \quad \mbox{and} \quad \mu = \nu = 0 ,
\end{equation}
where
\begin{equation}
\label{eq:defu}
u \coloneq u(p) = \frac{1 + \sqrt{2p-1}}{2} \, .
\end{equation}
\end{theorem}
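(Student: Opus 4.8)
The plan is to verify the two claims that Theorem~\ref{thm:rank2} promises: that the parameter choice \eqref{eq:rank2_kind3_parameters} produces a legitimate rank-2 X-density matrix, and that this matrix has precisely the prescribed purity $p$ and concurrence $c$. Following the template established in the proof of the first theorem, I would begin by substituting \eqref{eq:rank2_kind3_parameters} into the defining expressions \eqref{eq:calBCGH} for $\mathcal{B}$, $\mathcal{C}$, $\mathcal{G}$ and $\mathcal{H}$. Since $\psi = 0$ forces $\sin^{2}\psi = 0$, I expect the quantities $\mathcal{G}$ and $\mathcal{H}$ to simplify dramatically: one of them should vanish while the other becomes a clean function of $u$ and $\varphi$. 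Combined with $x = 0$, the goal is to match the pattern $\left( x = 0,\ y < \mathcal{G},\ \mathcal{C} = 0 \right)$ appearing in \eqref{eq:rank2_params}, which certifies the matrix as a rank-2 X-state of the second kind.

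Once the rank is established, I would turn to the concurrence. Plugging $x = 0$ and $y = c^{2}/4$ into \eqref{eq:concurrence}, the term $\sqrt{y} - \sqrt{\mathcal{H}}$ should reduce to $c/2$ once $\mathcal{H}$ is evaluated (I anticipate $\mathcal{H} = 0$ here, given $\sin^{2}\psi = 0$), while the competing term $\sqrt{x} - \sqrt{\mathcal{G}}$ is non-positive because $x = 0$; the maximum then yields $C(\bm{\varrho}) = 2 \cdot (c/2) = c$ as required. The purity claim is the parallel computation: inserting the parameter values into \eqref{eq:purity} and using the simplified $\mathcal{B}$, $\mathcal{C}$, $\mathcal{G}$, $\mathcal{H}$, I expect the definition \eqref{eq:defu} of $u(p)$ to have been reverse-engineered precisely so that $P(\bm{\varrho}) = p$ collapses to an identity. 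The key algebraic fact to exploit is that $u = (1 + \sqrt{2p-1})/2$ satisfies $u(1-u) = (1-(2p-1))/4 = (1-p)/2$, which is the combination that should appear in the purity formula.

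The genuine obstacle, and the part deserving the most care, is establishing that the construction is \emph{valid across the entire declared domain} rather than merely at a single representative point. This requires checking two range conditions simultaneously for every admissible pair $(p,c)$ with $p \in [5/9,1[$. First, $u(p)$ must lie in $[0,1]$ so that $\theta = \arcsin(\sqrt{u})$ is well-defined; the lower limit $p = 5/9$ and the approach to $p = 1$ must be shown to keep $u$ in range. Second, and more delicately, the argument $c/u$ of the $\arcsin$ defining $\varphi$ must satisfy $c/u \leq 1$, i.e. $c \leq u$, for \emph{every} concurrence value $c$ that is actually attainable by a rank-2 two-qubit state at purity $p$. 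This is where the geometry of the CP-diagram enters: I would invoke the explicit boundary of the rank-2 region (derived in~\ref{app:cpdiagrams}) to show that the maximal concurrence at a given purity never exceeds $u(p)$, thereby guaranteeing $c/u \in [0,1]$ throughout. Verifying this containment — that the constructed family reaches exactly the rank-2 CP-region and no further — is the crux, since it is what upgrades a pointwise verification into the surjectivity-onto-the-diagram statement the theorem actually asserts.
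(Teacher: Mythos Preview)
Your overall strategy mirrors the paper's proof, but there is a concrete misidentification that would derail the argument if you followed it literally. You anticipate matching the second-kind rank-2 pattern $\left( x = 0,\ y < \mathcal{G},\ \mathcal{C} = 0 \right)$ from \eqref{eq:rank2_params}. This fails on both counts: with $\psi = 0$ one finds $\mathcal{B} = \sin^{2}\theta = u$, hence $\mathcal{C} = 1 - u$, which is strictly positive for every $p \in [5/9,1[$ (indeed $u \in [2/3,1[$). Moreover, the computation gives $\mathcal{G} = u^{2}\sin^{2}\varphi\cos^{2}\varphi = c^{2}/4$, so $y = \mathcal{G}$ with equality, not $y < \mathcal{G}$. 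The correct pattern is the \emph{third} kind, $\left( x = \mathcal{H},\ y = \mathcal{G},\ \mathcal{BC} > 0 \right)$, which holds because $x = 0 = \mathcal{H}$, $y = c^{2}/4 = \mathcal{G}$, and $\mathcal{BC} = u(1-u) > 0$. Once you make this correction, the remainder of your plan (evaluating \eqref{eq:purity} and \eqref{eq:concurrence}, and the algebraic identity $u(1-u) = (1-p)/2$) goes through exactly as in the paper.

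A minor point on the domain check: the theorem concerns \emph{generic} two-qubit states of purity $p \in [5/9,1[$, not specifically rank-2 states, so the relevant bound to invoke is the generic CP-diagram boundary $c \leq u(p)$ established in \ref{app:cpdiagrams}, not a rank-2-specific one. That bound is precisely what guarantees $c/u \leq 1$ and hence that $\varphi$ is well-defined.
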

%%%%%%%%%%%%%%%%%%%%%%%%%%%%%%%%%%%%%%%%%%%%%%%%%%%%%%%%%%%%%%%%%%%%%%%%%%%%%%%%%%%%%%%%%%%%%%%%%%%%%%%%%%%%%%%%%%%%%%%%%%%%%%%%%%
\begin{proof} 
As shown in~\ref{app:cpdiagrams}, the concurrence of generic two-qubit states with purities $p \in [5/9,1[$ is limited to the interval $c \in [0,u]$, thus $\theta$ and $\varphi$ from Eq. \eqref{eq:rank2_kind3_parameters} are well-defined. Direct computation of
the coefficients $\mathcal{B}$, $\mathcal{G}$ and $\mathcal{H}$ results in
\begin{equation}
\mathcal{B} =u ,\quad \mathcal{G} = \frac{c^{2}}{4} \quad \mbox{and} \quad \mathcal{H} = 0 \, ,
\end{equation}
which complies with $x = \mathcal{H}$, $y = \mathcal{G}$, $\mathcal{BC} > 0$ and, hence, characterizes the resulting states as
rank-2 X-states of the third kind. Finally, note that straightforward evaluation of Eqs. \eqref{eq:purity} and
\eqref{eq:concurrence} with the choice of parameters \eqref{eq:rank2_kind3_parameters} gives, respectively, $P = p$ and $C = c$.
\begin{flushright}
$\Box$
\end{flushright}
\end{proof}
%%%%%%%%%%%%%%%%%%%%%%%%%%%%%%%%%%%%%%%%%%%%%%%%%%%%%%%%%%%%%%%%%%%%%%%%%%%%%%%%%%%%%%%%%%%%%%%%%%%%%%%%%%%%%%%%%%%%%%%%%%%%%%%%%%

Regarding theorem \ref{thm:rank2}, two remarks are worth pointing out. First, the parameters of Eq.
\eqref{eq:rank2_kind3_parameters} produce valid rank-2 X-states also in the CP-region $p \in [1/2,5/9[$ and $c \in [0,u]$, hence
covering the entire shaded area in Fig. \ref{fig:minset}(a). Since generic \emph{rank-2} states are restricted to the CP-region
$p \in [1/2,1[$ and $c \in [0,u]$ (cf.~\ref{app:cpdiagrams}), we may conclude that the parameters of Eq.~(\ref{eq:rank2_kind3_parameters}) lead to 
X-state counterparts of same concurrence, purity and \emph{rank} for \emph{every} two-qubit state of rank-2. Besides, as we demonstrate
in \ref{app:purrankunit}, any pair of rank-2 density matrices of same purity can be related via unitary conjugation. Hence, the 
X-state counterparts defined by Eq. (\ref{eq:rank2_kind3_parameters}) can be produced by a unitary transformation of a rank-2 two-qubit state 
of the same purity.

Secondly, in theorem \ref{thm:rank2} we relied upon rank-2 X-states of the third kind to exhaust the corresponding CP-region.
Indeed, rank-2 X-states of the first and second kinds cannot achieve concurrences greater than $q \coloneq \sqrt{2p-1}$ 
[dotted line in Fig. \ref{fig:minset}(a)], being thus unsuitable for the task. This can be easily seen by computing the 
purity and concurrence for such states and then combining the resulting expressions to get
\begin{equation}
c = \sqrt{q^{2} - 1 + \sin^{2} (2 \vartheta}) \, ,
\end{equation}
where $\vartheta$ represents $\theta$ (in the case of the first kind parameters) or $\varphi$ (in the case of the second kind
parameters). Clearly, the maximal value of $c$ is $q < u$.

%%%%%%%%%%%%%%%%%%%%%%%%%%%%%%%%%%%%%%%%%%%%%%%%%%%%%%%%%%%%%%%%%%%%%%%%%%%%%%%%%%%%%%%%%%%%%%%%%%%%%%%%%%%%%%%%%%%%%%%%%%%%%%%%%%
\begin{figure}[h]
\centering
\includegraphics{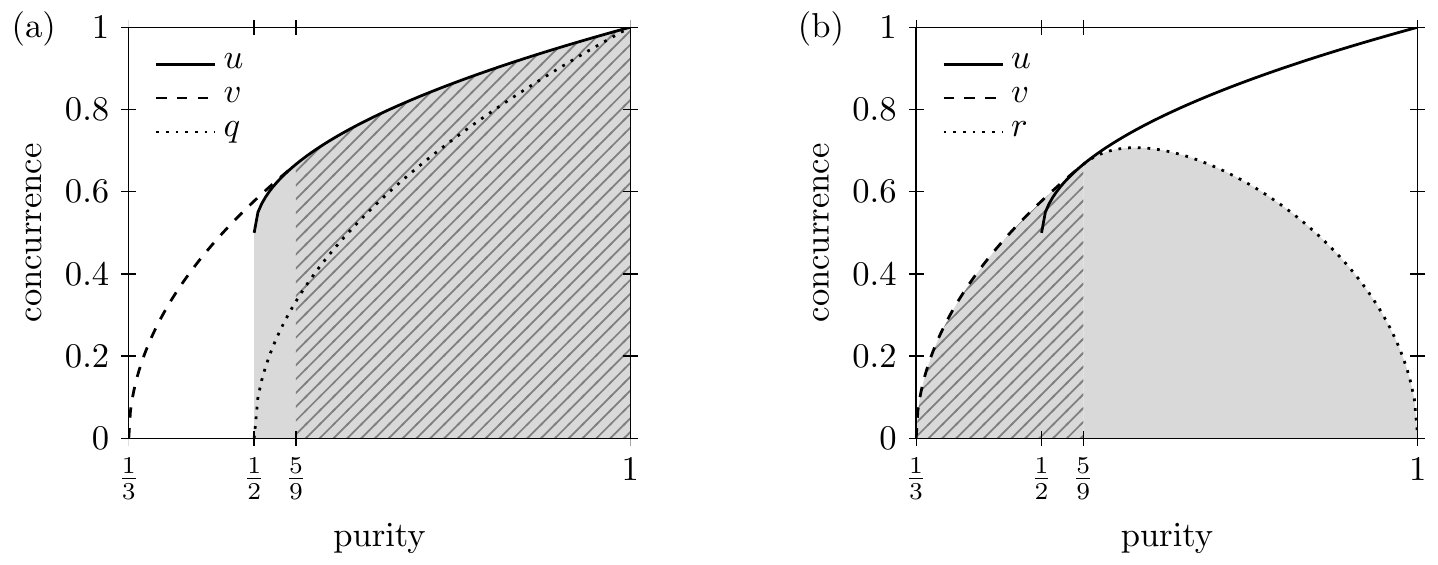}
\caption{Occupation of a two-qubit CP-diagram by rank-2 and rank-3 X-states specified in theorems \ref{thm:rank2} and
\ref{thm:rank3}, where the maximum between solid and dashed lines determines the upper bound of the CP-diagram for generic 
two-qubit states. The shading in (a) represents the accessible region to rank-2 X-states of the third kind [parameters given by
Eq. (\ref{eq:rank2_kind3_parameters})], which coincides with the accessible region for generic rank-2 two-qubit states. Note that
the hatching highlights the region covered by rank-2 elements of our minimal set. Besides, the dotted line gives the upper bound
for rank-2 X-states of the first and second kinds, showing that they can only exhaust the CP-region $p \in [1/2,1[$ and
$c \in [0,q]$. The shading in (b) represents the accessible region to rank-3 X-states of the first kind [parameters given by
Eq. (\ref{eq:rank3_kind1_parameters})], while the hatching highlights the region covered by rank-3 elements of our minimal set 
that completes the non-hatched CP-region in (a). In this plot, the dotted line gives the upper bound for rank-3 X-states of the
first kind over the extended domain $[5/9,1[$, showing that \emph{these} rank-3 X-states do not exhaust the entire CP-region.}
\label{fig:minset}
\end{figure}
%%%%%%%%%%%%%%%%%%%%%%%%%%%%%%%%%%%%%%%%%%%%%%%%%%%%%%%%%%%%%%%%%%%%%%%%%%%%%%%%%%%%%%%%%%%%%%%%%%%%%%%%%%%%%%%%%%%%%%%%%%%%%%%%%%

%%%%%%%%%%%%%%%%%%%%%%%%%%%%%%%%%%%%%%%%%%%%%%%%%%%%%%%%%%%%%%%%%%%%%%%%%%%%%%%%%%%%%%%%%%%%%%%%%%%%%%%%%%%%%%%%%%%%%%%%%%%%%%%%%%
\begin{theorem}
\label{thm:rank3}
For every generic state of concurrence $c$ and purity $p \in [1/3,5/9[$, rank-3 X-states of same concurrence and purity can be
constructed from equation \eqref{eq:Xstates_param} by taking
\begin{equation}
\label{eq:rank3_kind1_parameters}
\theta = \arcsin \left( \sqrt{2w} \right) , \quad \varphi = \frac{\pi}{4} , \quad \psi = \frac{\pi}{2} , \quad 
x = \frac{c^{2}}{4} \quad \mbox{and} \quad y = \mu = \nu = 0 \, ,
\end{equation}
where
\begin{equation}
\label{eq:defw}
w \coloneq w(v(p),c) = \frac{1}{3} - \frac{1}{2} \sqrt{\frac{v^{2}}{3} - \frac{c^{2}}{3}} \quad \mbox{and} \quad 
v \coloneq v(p) = \sqrt{2p - \frac{2}{3}} \, .
\end{equation}
\end{theorem}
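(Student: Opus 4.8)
The plan is to follow the template already established in the proofs of Theorems 1 and 2: first confirm that the parameters of Eq.~\eqref{eq:rank3_kind1_parameters} are well-defined and produce a genuine rank-3 X-state of the first kind, and then verify by direct substitution that $P=p$ and $C=c$. For well-definedness I would invoke the CP-diagram bound established in~\ref{app:cpdiagrams}, which for $p \in [1/3,5/9[$ restricts the attainable concurrence to $c \in [0,v]$ with $v = \sqrt{2p-2/3}$; this guarantees that the radicand $v^2-c^2$ defining $w$ is non-negative, so that $w$ is real. It then remains only to check that $\theta = \arcsin(\sqrt{2w})$ is legitimate, i.e. that $0 \le 2w \le 1$: since $w = 1/3 - \frac{1}{2}\sqrt{(v^2-c^2)/3}$ one has $w \le 1/3$, hence $2w \le 2/3 < 1$, while $w>0$ follows from $\sqrt{(v^2-c^2)/3} \le \sqrt{v^2/3} < 2/3$ using $v^2 < 4/9$.

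Next I would compute the diagonal coefficients. Substituting $\sin^2\theta = 2w$, $\varphi = \pi/4$ and $\psi = \pi/2$ into Eq.~\eqref{eq:calBCGH} yields $\mathcal{B} = w$, $\mathcal{G} = 0$ and $\mathcal{H} = w(1-2w)$. Together with $y=0$ and $x = c^2/4$ these must be matched against the rank-3 kind-1 conditions of Eq.~\eqref{eq:rank3_params}, namely $x < \mathcal{H}$, $y = \mathcal{G}$ and $\mathcal{B} > 0$. The requirements $y = \mathcal{G} = 0$ and $\mathcal{B} = w > 0$ are immediate from the previous paragraph, and the concurrence identity $C=c$ then follows at once from Eq.~\eqref{eq:concurrence}, since $\sqrt{x}-\sqrt{\mathcal{G}} = c/2$ is the only non-negative entry inside the maximum while $\sqrt{y}-\sqrt{\mathcal{H}} = -\sqrt{\mathcal{H}} \le 0$.

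The two steps that carry the real content — and where I expect the main effort — are verifying the strict inequality $x < \mathcal{H}$ and the purity identity $P=p$. For the former I would rewrite $\mathcal{H}-x$ in terms of $t \coloneq v^2 - c^2 \ge 0$, obtaining an expression of the form $1/9 + \frac{1}{6}\sqrt{t/3} + t/12 - v^2/4$; since $p < 5/9$ forces $v^2 < 4/9$, the constant part $1/9 - v^2/4$ is already strictly positive, and the remaining $t$-dependent terms are non-negative, so $x < \mathcal{H}$ holds throughout the domain. It degenerates to equality precisely at $p = 5/9$ (where $v^2 = 4/9$ and $t=0$ give $\mathcal{H}-x=0$), which neatly explains why that endpoint is excluded and why the construction there drops to rank 2. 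For the purity, I would substitute $\mathcal{B},\mathcal{G},\mathcal{H},x,y$ into Eq.~\eqref{eq:purity} to obtain $P = 1 - 4w + 6w^2 + c^2/2$, and then insert the definition of $w$ from Eq.~\eqref{eq:defw}. The crux is that the terms linear in $\sqrt{v^2-c^2}$ cancel and the explicit $c^2$-dependence drops out, collapsing the expression to $P = 1/3 + v^2/2 = p$. This telescoping cancellation is the heart of the argument, as it is exactly what certifies that the chosen form of $w$ inverts the purity relation for rank-3 kind-1 X-states.
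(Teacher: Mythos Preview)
Your proposal is correct and follows the same template as the paper's proof: invoke the bound $c\in[0,v]$ from \ref{app:cpdiagrams} to ensure well-definedness, compute $\mathcal{B}$, $\mathcal{G}$, $\mathcal{H}$ from Eq.~\eqref{eq:calBCGH}, verify the rank-3 kind-1 conditions, and check $P=p$, $C=c$ by substitution. The only noteworthy difference is tactical: you express $\mathcal{H}=w(1-2w)$ and prove $x<\mathcal{H}$ by writing $\mathcal{H}-x = (1/9 - v^2/4) + \tfrac{1}{6}\sqrt{t/3} + t/12$ with $t=v^2-c^2$, whereas the paper writes $\mathcal{H}=\tfrac{1}{3}(1-p+c^2/2-w)$ and uses the chain $c^2/4 \le (p-1/3)/2 < 2/3 - p < \tfrac{1}{3}(2/3-p)+\tfrac{2}{3}x \le \mathcal{H}$; both routes are elementary and equivalent.
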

%%%%%%%%%%%%%%%%%%%%%%%%%%%%%%%%%%%%%%%%%%%%%%%%%%%%%%%%%%%%%%%%%%%%%%%%%%%%%%%%%%%%%%%%%%%%%%%%%%%%%%%%%%%%%%%%%%%%%%%%%%%%%%%%%%
\begin{proof} 
As shown in~\ref{app:cpdiagrams}, the concurrence of generic two-qubit states with purities $p\in[1/3,5/9[$ is limited to the interval $c \in [0,v]$. In this case, some analysis of Eq. (\ref{eq:defw}) reveals that $w \in \left[ \frac{1}{3} - \frac{1}{3 \sqrt{3}},
\frac{1}{3} \right]$, rendering $\theta$ from Eq. \eqref{eq:rank3_kind1_parameters} well-defined. Direct computation of the
coefficients $\mathcal{B}$, $\mathcal{G}$ and $\mathcal{H}$ gives
\begin{equation}
\mathcal{B} = w , \quad \mathcal{G} = 0 \quad \mbox{and} \quad \mathcal{H} = \frac{1}{3} \left( 1 - p + \frac{c^{2}}{2} - w
\right) \, ,
\end{equation}
which complies with $x < \mathcal{H}$, $y = \mathcal{G}$, $\mathcal{B} > 0$ and, hence, characterizes the resulting states as
rank-3 X-states of the first kind. To see that $x < \mathcal{H}$, note the following:
\begin{equation}
x = \frac{c^{2}}{4} \leq \frac{1}{2} \left( p - \frac{1}{3} \right) < \frac{2}{3} - p < \frac{1}{3} \left( \frac{2}{3} - p
\right) + \frac{2}{3} x = \frac{1}{3} \left( 1 - p + \frac{c^{2}}{2} - \frac{1}{3} \right) \leq \mathcal{H} \, ,
\end{equation}
where the first, second and fourth inequalities follow from the upper bounds for $c$, $p$ and $w$, respectively. This particular
choice of parameters allows, through the straightforward evaluation of Eqs. \eqref{eq:purity} and \eqref{eq:concurrence}, to 
obtain $P = p$ and $C = c$. 
\begin{flushright}
$\Box$
\end{flushright}
\end{proof}
%%%%%%%%%%%%%%%%%%%%%%%%%%%%%%%%%%%%%%%%%%%%%%%%%%%%%%%%%%%%%%%%%%%%%%%%%%%%%%%%%%%%%%%%%%%%%%%%%%%%%%%%%%%%%%%%%%%%%%%%%%%%%%%%%%

A few remarks about theorem \ref{thm:rank3} are due. First, by imposing $\mathcal{B} > 0$ and $x < \mathcal{H}$ to the choice of
parameters of Eq. (\ref{eq:rank3_kind1_parameters}), we find that they also produce valid rank-3 X-states in the CP-region
$p \in [5/9,1[$ and $c \in [0,r[$, where
\begin{equation}
r \coloneq r(p) = \sqrt{2} \sqrt{1 - 2p + \sqrt{2p-1}} \, .
\end{equation}
However, since in this purity range the concurrence of generic two-qubit states goes up to $u \geq r$, such a choice does not fill
the entire CP-region in the purity interval $[5/9,1[$, as shown with shading in Fig. \ref{fig:minset}(b). Although a different
choice of rank-3 parameters could be tailored to exhaust that region, for now we shall leave its occupancy for the choice of
rank-2 parameters of Eq. (\ref{eq:rank2_kind3_parameters}), as shown with hatching in Fig. \ref{fig:minset}(a).

Second, although the choice of parameters of Eq. (\ref{eq:rank3_kind1_parameters}) yields rank-3 X-states of the first kind, it is
also possible to access every $p \in [1/2,5/9[$ and $c \in [0,v]$ with rank-3 X-states of the second kind. This can be achieved,
for example, with
\begin{equation}
\label{eq:rank3_kind2_parameters}
\theta = \arcsin (\sqrt{z}) \, , \quad \varphi = \frac{\pi}{4} , \quad \psi = x = 0 , \quad y = \frac{c^{2}}{4} \quad \mbox{and}
\quad \mu = \nu = 0 \, ,
\end{equation}
where
\begin{equation}
z \coloneq \left\{ \begin{array}{cc}
\frac{4}{3} - 2w & \mbox{if} \quad 2p \leq 1+c^{2} \\
2w & \mbox{if} \quad 2p > 1+c^{2}
\end{array} \right. \, .
\end{equation}
A proof of this follows the same steps presented above and will be omitted. We only note that also this choice of parameters can
be extended to the purity domain $[5/9,1[$. However, this is only possible for certain values of $c$ which (i) do not cover every
point already visited by the extension of the parameters of Eq. (\ref{eq:rank3_kind1_parameters}) [shaded and non-hatched region
in Fig. \ref{fig:minset}(b)] and (ii) do not cover every point left unvisited by the extension of the parameters of Eq.
(\ref{eq:rank3_kind1_parameters}) [empty area between the solid and dotted lines in Fig. \ref{fig:minset}(b)].

Third, although for every rank-3 two-qubit state of purity $p\in[1/3,5/9[$ we have constructed a X-state of same purity, rank and
concurrence, that does not mean that our construction is related to the input state via a unitary conjugation. As we demonstrate
in \ref{app:purrankunit}, such a conclusion can only be drawn in the case of rank-1 and rank-2 states. For example, consider the
two-qubit density matrix of rank-3
\begin{equation}
\frac{1}{40} \left[ \begin{array}{cccc}
13 & 3 \sqrt{3} & 2 \sqrt{3} & -10 \\
3 \sqrt{3} & 7 & 6 & -2 \sqrt{3} \\
2 \sqrt{3} & 6 & 7 & -3 \sqrt{3} \\
-10 & -2 \sqrt{3} & -3 \sqrt{3} & 13
\end{array} \right] \, ,
\end{equation}
which has $p=0.54$ and $c=0.4$. The following matrices are rank-3 X-state counterparts constructed according to the parameter values of
Eqs. (\ref{eq:rank3_kind1_parameters}) and (\ref{eq:rank3_kind2_parameters}), respectively:
\begin{equation}
\frac{1}{30} \left[ \begin{array}{cccc}
10 + 2 \sqrt{19} & \cdot & \cdot & 6 \\
\cdot & 10 - \sqrt{19} & \cdot & \cdot \\
\cdot & \cdot & \cdot & \cdot \\
6 & \cdot & \cdot & 10 - \sqrt{19}
\end{array} \right]
\quad \mbox{and} \quad
\frac{1}{30} \left[ \begin{array}{cccc}
10 - 2 \sqrt{19} & \cdot & \cdot & \cdot \\
\cdot & 10 + \sqrt{19} & 6 & \cdot \\
\cdot & 6 & 10 + \sqrt{19} & \cdot \\
\cdot & \cdot & \cdot & \cdot
\end{array} \right] \, .
\end{equation}
Although the three matrices share the same rank, purity and concurrence, each one displays a different set of eigenvalues, being
thus impossible to be related via unitary conjugation. Of course, this does not preclude the existence of yet another 
X-state counterpart that could be obtained via unitary conjugation of the input density matrix. The existence of such counterparts will
be proved in the next section.

As a summary of the main results of this section, we now explicitly state the matrix forms (in the computational basis) of the
elements of our minimal set $\bm{\mathcal{S}}_{X}$ -- formed from the parameter choices of Eqs. (\ref{eq:rank1_kind1_parameters}),
(\ref{eq:rank2_kind3_parameters}) and (\ref{eq:rank3_kind1_parameters}) -- namely,
\begin{equation}
\bm{\mathcal{S}}_{X} = \left\{ \bm{\varrho}_{i}(p,c) \, \left| \, i = 1\; \mbox{for}\; p = 1 ,\, i = 2\;\mbox{for}\; p \in \left[
\frac{5}{9},1 \right[ ,\, i = 3\;\mbox{for}\; p \in \left[ \frac{1}{3},\frac{5}{9} \right[ \right. \right\} \, ,
\end{equation}
with
\begin{align}
\bm{\varrho}_{1}(c) & = \frac{1}{2} \left[ \begin{array}{cccc}
1 + \sqrt{1-c^{2}} & \cdot & \cdot & c \\
\cdot & \cdot & \cdot & \cdot \\
\cdot & \cdot & \cdot & \cdot \\
c & \cdot & \cdot & 1 - \sqrt{1-c^{2}}
\end{array} \right] \, , \\
\bm{\varrho}_{2}(u(p),c) & = \frac{1}{2} \left[ \begin{array}{cccc}
2 - 2u & \cdot & \cdot & \cdot \\
\cdot & u + \sqrt{u^{2} - c^{2}} & c & \cdot \\
\cdot & c & u - \sqrt{u^{2}-c^{2}} & \cdot \\
\cdot & \cdot & \cdot & \cdot
\end{array} \right] \, , \\
\bm{\varrho}_{3}(w(p,c),c) & = \frac{1}{2} \left[ \begin{array}{cccc}
2 - 4w & \cdot & \cdot & c \\
\cdot & 2w & \cdot & \cdot \\
\cdot & \cdot & \cdot & \cdot \\
c & \cdot & \cdot & 2w 
\end{array} \right] \, ,
\end{align}
where $u=u(p)$ and $w=w(p,c)$ were defined in Eqs. (\ref{eq:defu}) and (\ref{eq:defw}), respectively. We remark that although many
other minimal sets that exhaust the entangled region of the CP-diagram of two-qubit states do exist, $\bm{\mathcal{S}}_{X}$ has
the advantage of being highly sparse and formed exclusively by rank-deficient X-states.

Let us conclude this section with a word of caution: the exhaustion of the CP-diagram with elements of $\bm{\mathcal{S}}_{X}$ does
not imply that other entanglement versus mixedness diagrams will also be exhausted by $\bm{\mathcal{S}}_{X}$. This is illustrated
in Fig. \ref{fig:negvspurity} in the case of the negativity versus purity diagram (cf. \ref{app:entmeasures} for the definition
and a brief review of the entanglement measure negativity). In this figure, the thick line bounds the negativity of generic 
two-qubits states of a fixed purity (for more details, see Ref. \cite{03Wei22110}). The shading highlights the accessible region
to the elements of $\bm{\mathcal{S}}_{X}$ and it was obtained by numerically computing their negativity. Noticeably, the shading
does not completely fill the area below the thick line. This is better understood by recalling that different entanglement
measures quantify different ``types of entanglement'' \cite{09Horodecki865}. So, while the states of $\bm{\mathcal{S}}_{X}$
exhaust the possible values of concurrence-like-entanglement for a fixed purity, they may (and do) lack some values of 
negativity-like-entanglement. 

The magnification glass in Fig. \ref{fig:negvspurity} also shows that, for $p<5/9$, the shading goes beyond the (thin) line
generated by the elements of $\bm{\mathcal{S}}_{X}$ with maximal concurrence per purity (such states formed the border of the 
CP-diagram in Fig. \ref{fig:minset}). This is also due to the existence of many types of entanglement and, in particular, to 
the fact that any two different entanglement measures place different orderings on the set of density matrices
\cite{99Eisert145,00Virmani31}: although the states that generate points above the line are obviously less concurrence-entangled
than the maximally concurrence-entangled states of same purity, they are more negativity-entangled than the latter. Curiously,
though, for $p > 5/9$, the elements of $\bm{\mathcal{S}}_{X}$ with maximal concurrence are also the elements of highest
negativity.
%%%%%%%%%%%%%%%%%%%%%%%%%%%%%%%%%%%%%%%%%%%%%%%%%%%%%%%%%%%%%%%%%%%%%%%%%%%%%%%%%%%%%%%%%%%%%%%%%%%%%%%%%%%%%%%%%%%%%%%%%%%%%%%%%%
\begin{figure}[h]
\centering
\includegraphics{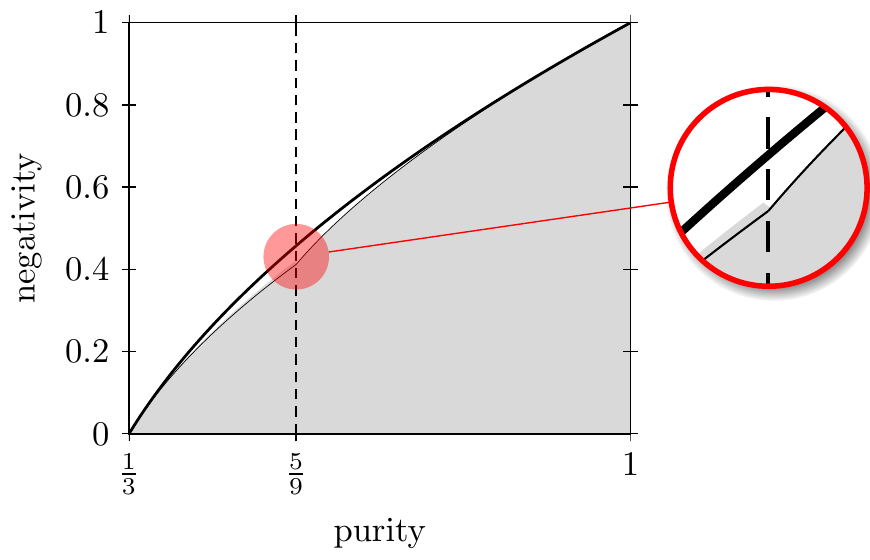}
\caption{Occupation of the negativity versus purity diagram by the elements of $\bm{\mathcal{S}}_{X}$. The thick line gives the
boundary of the diagram for generic two-qubit states and the shading highlights the region occupied by the elements of
$\bm{\mathcal{S}}_{X}$. The thin line indicates the negativity of the states of $\bm{\mathcal{S}}_{X}$ of maximal concurrence 
for a fixed purity. The magnification glass shows that for $p < 5/9$ there are elements of $\bm{\mathcal{S}}_{X}$ that generate
points beyond the thin line, evidencing different orderings on the set of density matrices imposed by negativity and concurrence.}
\label{fig:negvspurity}
\end{figure}
%%%%%%%%%%%%%%%%%%%%%%%%%%%%%%%%%%%%%%%%%%%%%%%%%%%%%%%%%%%%%%%%%%%%%%%%%%%%%%%%%%%%%%%%%%%%%%%%%%%%%%%%%%%%%%%%%%%%%%%%%%%%%%%%%%

%%%%%%%%%%%%%%%%%%%%%%%%%%%%%%%%%%%%%%%%%%%%%%%%%%%%%%%%%%%%%%%%%%%%%%%%%%%%%%%%%%%%%%%%%%%%%%%%%%%%%%%%%%%%%%%%%%%%%%%%%%%%%%%%%%
\section{Entanglement Universality via Unitary Evolution}\label{sec:universality}
%%%%%%%%%%%%%%%%%%%%%%%%%%%%%%%%%%%%%%%%%%%%%%%%%%%%%%%%%%%%%%%%%%%%%%%%%%%%%%%%%%%%%%%%%%%%%%%%%%%%%%%%%%%%%%%%%%%%%%%%%%%%%%%%%%

So far, we have established a weak form of X-state entanglement universality: there are more than necessary X-states to visit every point of a generic two-qubit CP-diagram. As we have seen, though, our produced X-state counterparts cannot, in general, be obtained from the input states via a unitary transformation, nor they will exhaust entanglement versus purity diagrams other than the one in which entanglement is quantified by concurrence. Needless to say, however, is that the possibility of coherently producing X-state counterparts to achieve any value of entanglement, as measured by any entanglement measure, would be very interesting from both fundamental and practical viewpoints.

In this section we considerably strengthen our preliminary universality result to accommodate some of the aforementioned desiderata. Specifically, we claim to be always possible to coherently produce X-state counterparts for any two-qubit state preserving (not simultaneously) its concurrence, negativity or relative entropy of entanglement (cf. \ref{app:entmeasures} for a brief review of these entanglement measures). Throughout, we shall refer to these coherently produced X-state counterparts as \emph{X-counterparts}. Of course, the requirement of coherent preparation implies in preservation of mixedness (e.g. purity or von Neumann entropy), so that X-counterparts will exhaust many types of entanglement versus mixedness diagrams (e.g., all types considered in Ref.~\cite{03Wei22110}).

An important step toward proving this stronger universality claim was given by Verstraete \emph{et al.}~\cite{01Verstraete12316}, who showed that concurrence, negativity and relative entropy of entanglement of a generic two-qubit state $\bm{\rho}_G^{ent}$ is maximized by conjugation with a unitary matrix of the form $\bm{\mathcal{U}}=(\bm{U}_1\otimes \bm{U}_2) \bm{\mathcal{O}} \bm{D}_\phi \bm{\Phi}^\dagger$, where $\bm{U}_1$ and $\bm{U}_2$ are arbitrary local unitary transformations, $\bm{D}_\phi$ is a unitary diagonal matrix, $\bm{\Phi}$ is the unitary matrix such that $\bm{\Phi}^\dagger \bm{\rho}_G^{ent} \bm{\Phi}$ is the diagonal matrix of eigenvalues of $\bm{\rho}_G^{ent}$ sorted in non-ascending order, and $\bm{\mathcal{O}}$ is the improper orthogonal matrix
\begin{equation}
\bm{\mathcal{O}}=\left[
\begin{array}{cccc}
\cdot & \cdot & \cdot & 1\\
\frac{1}{\sqrt{2}} & \cdot & \frac{1}{\sqrt{2}} & \cdot\\[2mm]
\frac{1}{\sqrt{2}} & \cdot & -\frac{1}{\sqrt{2}} & \cdot\\
\cdot & 1 & \cdot & \cdot
\end{array}
\right]\,.
\end{equation}
From here, it is immediate to find that (up to local unitary transformations), the density matrix of eigenvalues $\lambda_1\geq\lambda_2\geq\lambda_3\geq\lambda_4$ with maximal concurrence, negativity and relative entropy of entanglement is
\begin{equation}\label{eq:MEMS}
\bm{\rho}_{X}^{max}=\bm{\mathcal{U}}\bm{\rho}_G^{ent}\bm{\mathcal{U}}^\dagger=\frac{1}{2}\left[\begin{array}{cccc}
2\lambda_4 & \cdot & \cdot & \cdot \\
\cdot & \lambda_1+\lambda_3 & \lambda_1-\lambda_3 & \cdot\\
\cdot & \lambda_1-\lambda_3 & \lambda_1+\lambda_3 & \cdot\\
\cdot & \cdot & \cdot & 2\lambda_2
\end{array}\right]\,.
\end{equation}
Thus, for a fixed spectrum, the maximally entangled mixed state (under the three considered measures) is a X-state. 

It follows from this observation and from the intermediate value theorem (see, e.g., Ref.~\cite{67Apostol}), that to prove our stronger universality claim it suffices to show that (i) any entangled X-state can be disentangled via a unitary transformation that preserves the X-form, and (ii) concurrence, negativity and relative entropy of entanglement vary continuously during the referred disentangling evolution. In fact,  if (i) and (ii) are true, then the X-counterpart of $\bm{\rho}_G^{ent}$ can be prepared by composing two unitary evolutions, $\bm{\mathcal{U}}$ and $\bm{\mathcal{V}}$, as in
\begin{equation}
\bm{\rho}_G^{ent} \stackrel{\bm{\mathcal{U}}}{\longrightarrow} \bm{\rho}_{X}^{max} \stackrel{\bm{\mathcal{V}}}{\longrightarrow} \bm{\rho}_{X}^{ent}
\end{equation}
where $\bm{\mathcal{V}}$ denotes a unitary transformation that initiates a X-form preserving and disentangling transformation of $\bm{\rho}_X^{max}$, which is aborted when the instantaneous X-state reaches either the concurrence, negativity or relative entropy of entanglement of the initial state $\bm{\rho}_G^{ent}$ .

The remainder of this section is devoted to prove assertions (i) and (ii).

\subsection{Coherent Disentanglement with X-form Preservation}\label{sec:disentanglement}

We start by considering the unitary transformation induced by the following unitary matrix with $b_k \in [0,2\pi]$ and  $k=1,\ldots,4$:

\begin{equation}\label{eq:Vb}
\bm{V}=\left[\begin{array}{cccc}
\cos{b_1} & \cdot & \cdot & {\rm e}^{i b_2}\sin{b_1}\\
\cdot & \cos{b_3} & {\rm e}^{i b_4} \sin{b_3} & \cdot\\
\cdot & -{\rm e}^{-i b_4} \sin{b_3} & \cos{b_3} & \cdot\\
-{\rm e}^{-i b_2} \sin{b_1} & \cdot & \cdot & \cos{b_1} 
\end{array}
\right]\,.
\end{equation}
Clearly, $\bm{V}$ consists of two independent $SU(2)$ elements applied to the subspaces spanned by $\{\ket{00}, \ket{11}\}$ and $\{\ket{01}, \ket{10}\}$. Since a X-state can be seen as two fictitious qubits living in each of these subspaces, conjugation of an arbitrary X-state with $\bm{V}$ will necessarily preserve the X-form.\footnote{Note, however, that $\bm{V}$ does not induce the \emph{most general} unitary transformation that preserves the X-form. First, the most general element of $SU(2)$ has $3$ parameters (disconsidering an unimportant global phase), whereas each $SU(2)$ element in~(\ref{eq:Vb}) has only $2$ parameters. Second, even if we employed the most general $SU(2)$ parametrization, it is not difficult to see that it is possible to preserve the X-form with unitary transformations that are not of the X-form. Two obvious examples are the unitary transformations induced by conjugation with the unitary matrices $\Id_2\otimes\bm{\sigma}_1$ and $\bm{\sigma}_1\otimes\Id_2$.}

For what follows, it will prove itself useful to determine how certain parameters of an arbitrary X-state change under the unitary transformation induced by $\bm{V}$. Let $\bm{\rho}$ and $\bm{\rho}^\prime=\bm{V}\bm{\rho} \bm{V}^\dagger$ be two X-states with parameters $\{\theta,\varphi,\psi,x,y,\mu,\nu\}$ and $\{\theta^\prime,\varphi^\prime,\psi^\prime,x^\prime,y^\prime,\mu^\prime,\nu^\prime\}$, defined according to Eq.~(\ref{eq:Xstates_param}). Then, some straightforward (however tedious) computation gives

\begin{equation}\label{eq:xyprime}
\begin{array}{rcl}
x^\prime&=&h^2\sin^2{b_1}\cos^2{b_1}-h\sqrt{x} \sin{(2b_1)} \cos{(2b_1)}\cos\delta+x\left[1-\sin^2{(2b_1)}\cos^2\delta\right]\,,\\
y^\prime&=&g^2\sin^2{b_3}\cos^2{b_3}-g\sqrt{y} \sin{(2b_3)} \cos{(2b_3)}\cos\Delta+y\left[1-\sin^2{(2b_3)}\cos^2\Delta\right]\,,
\end{array}
\end{equation}
where, for brevity, we have defined $\delta \coloneq b_2-\mu$, $\Delta \coloneq b_4-\nu$,
\begin{equation}\label{eq:gh}
h\coloneq h(\theta,\varphi,\psi)= \cos^2\theta-\sin^2\theta \sin^2\varphi \sin^2\psi\quad\mbox{and}\quad g\coloneq g(\theta,\varphi,\psi)= \sin^2\theta(\cos^2\varphi-\sin^2\varphi \cos^2\psi)\,.
\end{equation}
Moreover, the following conservation laws can be easily established from the invariance of the trace and the determinant of a matrix unitarily conjugated:
\begin{align}
\mathcal{C}^\prime = \mathcal{C}\quad&\mbox{and}\quad \mathcal{B}^\prime = \mathcal{B}\,,\label{eq:conslaw1}\\
\mathcal{G}^\prime-y^\prime=\mathcal{G}-y\quad&\mbox{and}\quad\mathcal{H}^\prime-x^\prime=\mathcal{H}-x\,,\label{eq:conslaw2}
\end{align}
where $\mathcal{B}^{(\prime)}$, $\mathcal{C}^{(\prime)}$, $\mathcal{G}^{(\prime)}$ and $\mathcal{H}^{(\prime)}$ were defined in Eq.~(\ref{eq:calBCGH}). 

Let us now see how to set the parameters of $\bm{V}$ in order to turn it into a disentangling unitary transformation for any entangled X-state. From Sec. \ref{sec:separable}, we know that $\bm{\rho}^\prime=\bm{V}\bm{\rho} \bm{V}^\dagger$ will be a separable density matrix if and only if $x^\prime$ and $y^\prime$ are no greater than the minimum between $\mathcal{G}^\prime$ and $\mathcal{H}^\prime$, cf. Eq.~(\ref{eq:sep_params}). Combined with the conservation law~(\ref{eq:conslaw2}), this condition can be rewritten as
\begin{equation}\label{eq:sepcondprime}
x^\prime\leq\min[\mathcal{G}-y+y^\prime,\mathcal{H}-x+x^\prime]\qquad\mbox{and}\qquad y^\prime\leq\min[\mathcal{G}-y+y^\prime,\mathcal{H}-x+x^\prime]\,.
\end{equation}
From a strictly algebraic viewpoint, a simple choice of $x^\prime$ and $y^\prime$ that fulfills both inequalities immediately comes out. Consider, first, the case where the input state $\bm{\rho}$ has $\mathcal{H}>\mathcal{G}$, being thus identified with  a point in the $xy$ parameter space of Fig.~\ref{fig:disentangle}(a). If conjugation with $\bm{V}$ can move that point to the left in order to make $x^\prime=\mathcal{G}$, while keeping its ordinate constant, i.e. $y^\prime=y$, then the inequalities~(\ref{eq:sepcondprime}) become
\begin{equation}
\mathcal{G}\leq\min[\mathcal{G},\mathcal{G}+\mathcal{H}-x]\qquad\mbox{and}\qquad y\leq\min[\mathcal{G},\mathcal{G}+\mathcal{H}-x]\,.
\end{equation}
Noticeably, the first inequality is satisfied with saturation, as the minimization yields $\mathcal{G}$ thanks to the positive semidefiniteness of $\bm{\rho}$ that requires $x\leq \mathcal{H}$ [cf. Eq.~(\ref{eq:rangesxy})]. For the same reason, the second inequality becomes $y\leq\mathcal{G}$, whose validity also follows from the positive semidefiniteness of $\bm{\rho}$.

Analogously, if the input state $\bm{\rho}$ has $\mathcal{G}>\mathcal{H}$, as illustrated in Fig.~\ref{fig:disentangle}(b), moving it down such that $x^\prime=x$ and $y^\prime=\mathcal{H}$, turns inequalities~(\ref{eq:sepcondprime}) into
\begin{equation}\label{eq:xyprimeGgeqH}
x\leq\min[\mathcal{G}-y+\mathcal{H},\mathcal{H}]\qquad\mbox{and}\qquad \mathcal{H}\leq\min[\mathcal{G}-y+\mathcal{H},\mathcal{H}]\,.
\end{equation}
In this case, the result of the minimization is $\mathcal{H}$, from which follows that~(\ref{eq:xyprimeGgeqH}) reduces to the always-true inequalities $x\leq\mathcal{H}$ and $\mathcal{H}\leq\mathcal{H}$.

\begin{figure}[h]
\centering
\includegraphics{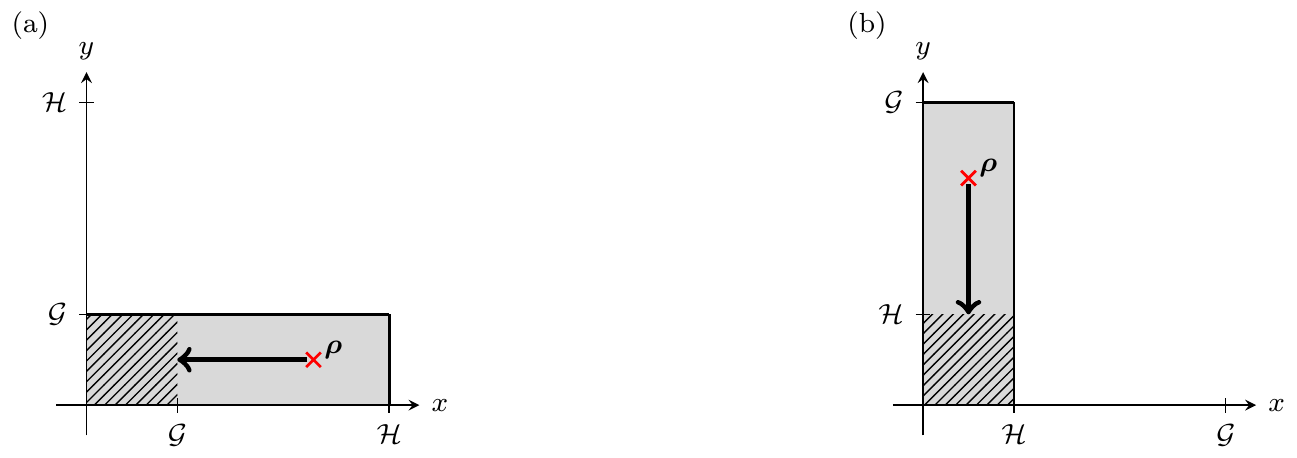}
\caption{Proposed disentangling protocol for an arbitrary two-qubit X-state $\bm{\rho}$. In (a), $\bm{\rho}$ is supposed to have $\mathcal{H}>\mathcal{G}$, in which case a disentangling transformation, represented by the left arrow, would produce a X-state $\bm{\rho}^\prime$ with $x^\prime=\mathcal{G}$ and $y^\prime=y$. In (b), $\bm{\rho}$ is supposed to have $\mathcal{G}>\mathcal{H}$. In this case, the analogous disentangling transformation (down arrow) outputs $\bm{\rho}^\prime$ with $x^\prime=x$ and $y^\prime=\mathcal{H}$.
}\label{fig:disentangle}
\end{figure}

In a nutshell, algebraically, inequalities~(\ref{eq:sepcondprime}) can be satisfied by choosing $x^\prime$ and $y^\prime$ as follows:\footnote{We need not to consider the case $\mathcal{H}=\mathcal{G}$ since every $\bm{\rho}$ with such property is automatically separable [cf. Fig.~\ref{fig:Xparameterspace}(c)].}

\begin{equation}\label{eq:xyprimedisent}
x^\prime=\left\{\begin{array}{cc}
\mathcal{G} &\mbox{if}\quad \mathcal{H}>\mathcal{G}\\
x &\mbox{if}\quad \mathcal{G}>\mathcal{H}
\end{array}\right.\qquad\mbox{and} \qquad y^\prime=\left\{\begin{array}{cc}
y &\mbox{if}\quad \mathcal{H}>\mathcal{G}\\
\mathcal{H} &\mbox{if}\quad \mathcal{G}>\mathcal{H}
\end{array}\right.\,.
\end{equation}
Before accepting Eq.~(\ref{eq:xyprimedisent}) as a solution, though, we must check whether conjugation of $\bm{\rho}$ with $\bm{V}$ can produce states $\bm{\rho}^\prime$ with such parameters values of $x^\prime$ and $y^\prime$. As it turns out, this is feasible for every entangled X-state $\bm{\rho}$:

\begin{theorem}\label{thm:existenceunitary}
Let $\{\theta, \varphi,\psi,x, y,\mu,\nu\}$ be the set of parameters specifying an arbitrary entangled X-state $\bm{\rho}$, and $\{\mathcal{G}, \mathcal{H}, g, h\}$ the set of associated functions of $\theta$, $\varphi$ and $\psi$ defined in equations~(\ref{eq:calBCGH}) and (\ref{eq:gh}). For every such $\bm{\rho}$, the following choice of parameters for the unitary matrix $\bm{V}$ of equation~(\ref{eq:Vb}) produces a X-state $\bm{\rho}^\prime=V\bm{\rho} V^\dagger$ whose parameters $x^\prime$ and $y^\prime$ are given by equation~(\ref{eq:xyprimedisent}). 

If $\mathcal{H}>\mathcal{G}$, make
\begin{equation}
b_2=\mu\,,\quad b_3=0\,,\quad b_4=\nu
\end{equation}
and $b_1$ such that $\cos(2 b_1)=\pm\sqrt{\mathcal{G}/x}$ if $h=0$. Otherwise ($h\neq 0$), $b_1$ is such that
\begin{equation}\label{eq:solfneq0}
\cos b_1 = \sqrt{\frac{1}{2}+\frac{\tilde{\mathfrak{s}}}{2}\sqrt{1-\mathfrak{z}_-}} \quad\mbox{and}\quad\sin b_1=\sqrt{\frac{1}{2}-\frac{\tilde{\mathfrak{s}}}{2}\sqrt{1-\mathfrak{z}_-}}\,,
\end{equation}
where
\begin{equation}\label{eq:defscaseHgG}
\tilde{\mathfrak{s}}=\sgn[h]\sgn\left[\mathfrak{z}_-\mathcal{X}_-+x-\mathcal{G}\right]\,,\quad \mathfrak{z}_-=\frac{x\mathcal{X}_+ + \mathcal{G}\mathcal{X}_-- |h|\sqrt{x\mathcal{G}(\mathcal{X}_+-\mathcal{G})}}{\mathcal{X}_+^2}\quad\mbox{and}\quad \mathcal{X}_\pm \coloneq \left(\frac{h}{2}\right)^2\pm x\,.
\end{equation}

If $\mathcal{H}<\mathcal{G}$, the expressions for $b_1$ and $b_3$ specified above must be interchanged, and the substitutions $x\rightarrow y$, $h\rightarrow{g}$ and $\mathcal{G}\rightarrow\mathcal{H}$ performed in equation~(\ref{eq:defscaseHgG}).
\end{theorem}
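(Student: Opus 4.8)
The plan is to exploit the fact that $\bm{V}$ acts as two \emph{independent} $SU(2)$ rotations, one on $S_1$ (governed by $b_1,b_2$) and one on $S_2$ (governed by $b_3,b_4$), and that $x'$ depends only on $b_1,b_2$ while $y'$ depends only on $b_3,b_4$ [cf. Eq.~(\ref{eq:xyprime})]. This decouples the problem into two one-block tasks, and I would carry out the case $\mathcal{H}>\mathcal{G}$ in full, deducing $\mathcal{H}<\mathcal{G}$ from the stated block-swap symmetry ($x\to y$, $h\to g$, $\mathcal{G}\to\mathcal{H}$, with the $b_1$ and $b_3$ prescriptions interchanged). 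For $\mathcal{H}>\mathcal{G}$ the target~(\ref{eq:xyprimedisent}) is $x'=\mathcal{G}$, $y'=y$. The condition $y'=y$ is met trivially by $b_3=0$ (the second block becomes the identity and Eq.~(\ref{eq:xyprime}) then gives $y'=y$, with $b_4=\nu$ an inconsequential convention). Everything thus reduces to choosing $b_1,b_2$ so that the $S_1$-block satisfies $x'=\mathcal{G}$.

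Next I would set $b_2=\mu$, so that $\delta=0$ and the $x'$ expression in Eq.~(\ref{eq:xyprime}) collapses, via double-angle identities, to $x'=\tfrac{h^2}{4}\sin^2(2b_1)-h\sqrt{x}\,\sin(2b_1)\cos(2b_1)+x\cos^2(2b_1)$. When $h=0$ this is simply $x'=x\cos^2(2b_1)$, so $x'=\mathcal{G}$ forces $\cos(2b_1)=\pm\sqrt{\mathcal{G}/x}$, which is well defined because for an entangled X-state with $\mathcal{H}>\mathcal{G}$ the positivity bound $y\le\mathcal{G}<\mathcal{H}$ kills the $\sqrt{y}-\sqrt{\mathcal{H}}$ branch of the concurrence~(\ref{eq:concurrence}), forcing $\sqrt{x}>\sqrt{\mathcal{G}}$ and hence $\mathcal{G}<x\le\mathcal{H}$. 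For $h\neq 0$, writing $t\coloneq\sin^2(2b_1)$ and isolating the cross term, the equation $x'=\mathcal{G}$ becomes $\mathcal{X}_-\,t+(x-\mathcal{G})=h\sqrt{x}\,\sin(2b_1)\cos(2b_1)$ with $\mathcal{X}_\pm=(h/2)^2\pm x$. Squaring produces the quadratic $\mathcal{X}_+^2\,t^2-2(x\mathcal{X}_++\mathcal{G}\mathcal{X}_-)\,t+(x-\mathcal{G})^2=0$, whose discriminant I would verify equals $4h^2x\mathcal{G}(\mathcal{X}_+-\mathcal{G})$; its roots are exactly $\mathfrak{z}_\pm$, and the parametrization~(\ref{eq:solfneq0}) is engineered so that $\sin^2(2b_1)=\mathfrak{z}_-$ and $\cos(2b_1)=\tilde{\mathfrak{s}}\sqrt{1-\mathfrak{z}_-}$.

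Two verifications then close the argument. The sign ambiguity introduced by squaring is resolved by insisting that the \emph{un-squared} equation hold: the chosen $b_1$ has $\sin(2b_1)=\sqrt{\mathfrak{z}_-}\ge 0$ and $\cos(2b_1)=\tilde{\mathfrak{s}}\sqrt{1-\mathfrak{z}_-}$, so the right-hand side carries sign $\sgn[h]\,\tilde{\mathfrak{s}}$, which matches that of the left-hand side precisely when $\tilde{\mathfrak{s}}=\sgn[h]\,\sgn[\mathfrak{z}_-\mathcal{X}_-+x-\mathcal{G}]$, reproducing Eq.~(\ref{eq:defscaseHgG}); the magnitudes already agree because $\mathfrak{z}_-$ solves the squared equation. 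I expect the \emph{main obstacle} to be the feasibility check that $\mathfrak{z}_-\in[0,1]$, ensuring a real $b_1$. I would dispatch it by evaluating the quadratic $q(t)$ at the endpoints and noting the pleasant factorizations $q(0)=(x-\mathcal{G})^2>0$ and $q(1)=\big(\tfrac{h^2}{4}-\mathcal{G}\big)^2\ge 0$, together with the fact that the vertex abscissa $t^\ast=(x\mathcal{X}_++\mathcal{G}\mathcal{X}_-)/\mathcal{X}_+^2$ lies in $(0,1)$ — all of which follow from $\mathcal{G}<x\le\mathcal{H}$. Since $q$ opens upward, is non-negative at both endpoints, and attains its (non-positive) minimum strictly inside $(0,1)$, both roots lie in $(0,1)$; in particular $\mathfrak{z}_-\in(0,1)$, rendering the square-root arguments in~(\ref{eq:solfneq0}) admissible. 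Discriminant positivity, $\mathcal{X}_+\ge\mathcal{G}$, is likewise immediate from $x>\mathcal{G}$.

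A conceptually cleaner route to the same feasibility step, which I would at least mention, is the geometric reading of the $S_1$-block as a $2\times 2$ Hermitian matrix of fixed trace $\mathcal{C}$ and determinant $\mathcal{H}-x$ [the conserved quantities of Eqs.~(\ref{eq:conslaw1})--(\ref{eq:conslaw2})]: $SU(2)$ conjugation sweeps its squared off-diagonal modulus $x'$ over the entire interval $[0,(\lambda_+-\lambda_-)^2/4]$, so reaching $x'=\mathcal{G}$ is possible exactly when $\mathcal{G}\le(\lambda_+-\lambda_-)^2/4$, an inequality equivalent to the endpoint and vertex conditions above. Finally, the case $\mathcal{H}<\mathcal{G}$ follows verbatim after interchanging the two blocks and performing the substitutions $x\to y$, $h\to g$, $\mathcal{G}\to\mathcal{H}$, since entanglement with $\mathcal{G}>\mathcal{H}$ symmetrically forces $\mathcal{H}<y\le\mathcal{G}$, completing the proof.
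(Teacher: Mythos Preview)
Your proposal is correct and follows essentially the same route as the paper: decouple the two $SU(2)$ blocks, set $b_2=\mu$, $b_3=0$, $b_4=\nu$, handle $h=0$ directly, and for $h\neq 0$ substitute $\mathfrak{z}=\sin^2(2b_1)$, square to obtain the same quadratic in $\mathfrak{z}$, pick the root $\mathfrak{z}_-$, and resolve the squaring ambiguity by enforcing the un-squared equation to recover $\tilde{\mathfrak{s}}$. The only cosmetic difference lies in the feasibility check $\mathfrak{z}_-\in(0,1)$: the paper bounds the numerators of $\mathfrak{z}_\pm$ via the AM--GM inequality, whereas you evaluate the quadratic at the endpoints (getting $q(0)=(x-\mathcal{G})^2$ and $q(1)=((h/2)^2-\mathcal{G})^2$) and place its vertex strictly inside $(0,1)$ --- but the vertex conditions $t^\ast>0$ and $t^\ast<1$ are exactly the inequalities the paper proves by AM--GM, so the two arguments are equivalent.
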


We defer to~\ref{app:proofthm} a proof that the parameters specified in the theorem are well-defined and actually implement the desired transformation.

\subsection{Continuity of Entanglement}\label{sec:continuity}

Now that we have established that an arbitrary two-qubit entangled X-state can be disentangled by conjugation with $\bm{V}$, we shall consider the entanglement dynamics of that state evolving under the action of the strongly continuous one-parameter unitary group $(\bm{V}_\tau)_{\tau\in[0,1]}$, where $\bm{V}_{\tau}\coloneq\exp{(i\bm{H}\tau)}$ and $\bm{H}$ is the (X-formed) Hermitian matrix such that $\exp{(i\bm{H})}=\bm{V}$. The explicit form of $\bm{V}_\tau$ can be promptly obtained from Eq.~(\ref{eq:Vb}) by performing the replacements $b_k\to b_k\tau$ for every $k=1,\ldots,4$ and with $b_k$ chosen according to theorem~\ref{thm:existenceunitary}. 

Clearly, as we vary $\tau$ within the range $[0,1]$, the resulting states $\bm{\rho}_\tau=\bm{V}_\tau \bm{\rho} \bm{V}_\tau^\dagger$ preserve X-form and spectrum, while entanglement varies from the initial value, in $\tau=0$, until zero, in $\tau=1$. In this section, we show that entanglement, as measured by concurrence, negativity and relative entropy of entanglement, varies \emph{continuously} with $\tau$, in such a way that every value of these entanglement measures between zero and the initial value can be reached by suitably choosing $\tau$. Furthermore, equations linking the value of $\tau$ to any desired value of concurrence and negativity are derived. We conduct separate analysis for each entanglement measure.

Before proceeding with the continuity analysis, a few notational points are worth mentioning. Throughout, we add the subindex $\tau$ to every parameter associated with the X-state $\bm{\rho}_\tau$. Accordingly, since $\bm{V}_\tau$ is a matrix of the form~(\ref{eq:Vb}), the conservation laws of Eq.~(\ref{eq:conslaw1}) and~(\ref{eq:conslaw2}) also hold with the primes replaced with the subindex $\tau$, namely:
\begin{align}
\mathcal{C}_\tau = \mathcal{C}\quad&\mbox{and}\quad \mathcal{B}_\tau = \mathcal{B}\,,\label{eq:conslaw1tau}\\
\mathcal{G}_\tau-y_\tau=\mathcal{G}-y\quad&\mbox{and}\quad\mathcal{H}_\tau-x_\tau=\mathcal{H}-x\,.\label{eq:conslaw2tau}
\end{align}
In addition, the parameters $x_\tau$ and $y_\tau$ can be promptly obtained by applying the replacement $b_k\to \tilde{b}_k\tau$ to Eq.~(\ref{eq:xyprime}), where $\tilde{b}_k$ denotes the value of $b_k$ specified in theorem~\ref{thm:existenceunitary}:
\begin{equation}
x_\tau=\left[\frac{h}{2}\sin (2 \tilde{b}_1\tau)-\sqrt{x}\cos (2\tilde{b}_1\tau)\right]^2\qquad\mbox{and}\qquad y_\tau=\left[\frac{g}{2}\sin (2 \tilde{b}_3\tau)-\sqrt{y}\cos (2\tilde{b}_3\tau)\right]^2\,.\label{eq:xytau}
\end{equation}
Identities~(\ref{eq:conslaw1tau}), (\ref{eq:conslaw2tau}) and (\ref{eq:xytau}) will be extensively used in what follows.

\subsubsection{Continuity of Concurrence}\label{sec:contconc}
Using Eqs.~(\ref{eq:YuEberlyConcurrence}) and~(\ref{eq:conslaw2tau}) we can express the concurrence of $\bm{\rho}_\tau$ as the following function of $\tau\in[0,1]$:
\begin{align}
\mathfrak{C}(\tau)&=2\max\left[0,\sqrt{x_\tau}-\sqrt{\mathcal{G}_\tau},\sqrt{y_\tau}-\sqrt{\mathcal{H}_\tau}\right]\nonumber\\
&=2\max\left[0,\sqrt{x_\tau}-\sqrt{\mathcal{G}-(y-y_\tau)},\sqrt{y_\tau}-\sqrt{\mathcal{H}-(x-x_\tau)}\right]\,.\label{eq:maximizationconc}
\end{align}
For definiteness, suppose that the initial entangled X-state $\bm{\rho}$ is such that $\mathcal{H}>\mathcal{G}$. Then, according to Eq.~(\ref{eq:xytau}),
 $y_\tau=y$ and $\mathcal{G}\leq x_\tau\leq x$, in which case the maximization of Eq.~(\ref{eq:maximizationconc}) can be explicitly evaluated by noticing that the third term is never positive\footnote{Note that the third term in the maximization of Eq.~(\ref{eq:maximizationconc}) is a \emph{decreasing} function in $x_\tau$, in such a way that its maximum value is achieved for the minimum value of $x_\tau$. Consequently, the non-positivity of this term is implied by the condition $y\leq \mathcal{H}-(x-\mathcal{G})$, which can be seen to be always satisfied by adding up $x\leq \mathcal{H}$ and $y\leq \mathcal{G}$.} and the second term is never negative. As a result, the concurrence formula simplifies to
\begin{equation}\label{eq:contCHG}
\mathfrak{C}_{\mathcal{H}>\mathcal{G}}(\tau) = 2\left(\sqrt{x_\tau}-\sqrt{\mathcal{G}}\right)\,,
\end{equation}
which reaches the maximum at $x_\tau=x$ (or $\tau=0$), the minimum at $x_\tau=\mathcal{G}$ (or $\tau=1$), and is clearly a continuous real function of $x_\tau$ defined on the interval $[\mathcal{G},x]$. Thanks to the obvious continuity of $x_\tau$ in $\tau$ [cf. Eq.~(\ref{eq:xytau})], it turns immediate that the concurrence is also continuous in $\tau$ defined on the interval $[0,1]$.

A completely analogous argument shows that if $\bm{\rho}$ is such that $\mathcal{G}>\mathcal{H}$, then
\begin{equation}\label{eq:contCGH}
\mathfrak{C}_{\mathcal{G}>\mathcal{H}}(\tau) = 2\left(\sqrt{y_\tau}-\sqrt{\mathcal{H}}\right)\,,
\end{equation}
which is also clearly continuous in $\tau$.

Apart from establishing the continuity of the concurrence in $\tau$, Eqs.~(\ref{eq:contCHG}) and~(\ref{eq:contCGH}) allow the determination of the value of $\tau=\tau_c$ that leads to a X-state of concurrence $c$. For example, if $\bm{\rho}$ is such that $\mathcal{H}>\mathcal{G}$, then the appropriate $\tau_c$ can be found by solving the following transcendental equation for $\tau$:
\begin{equation}
\frac{c}{2}=\left|\frac{h}{2}\sin(2\tilde{b}_1 \tau) - \sqrt{x} \cos(2 \tilde{b}_1 \tau)\right|-\sqrt{\mathcal{G}}\,.
\end{equation}
Numerically, this can be efficiently solved.

\subsubsection{Continuity of Negativity}
This closely follows the steps taken for establishing the continuity of the concurrence. Substitution of Eqs.~(\ref{eq:conslaw1tau}) and~(\ref{eq:conslaw2tau}) into~(\ref{eq:negdefX}) gives the negativity of $\bm{\rho}_\tau$ as the following function of $\tau\in[0,1]$:
\begin{align}
\mathfrak{N}(\tau)&=-\min\left[0,\frac{\mathcal{B}_\tau}{2}-\sqrt{\left(\frac{\mathcal{B}_\tau}{2}\right)^2-\mathcal{G}_\tau+x_\tau},\frac{\mathcal{C}_\tau}{2}-\sqrt{\left(\frac{\mathcal{C}_\tau}{2}\right)^2-\mathcal{H}_\tau+y_\tau}\right]\nonumber\\
&=-\min\left[0,\frac{\mathcal{B}}{2}-\sqrt{\left(\frac{\mathcal{B}}{2}\right)^2-\mathcal{G}+x_\tau+y-y_\tau},\frac{\mathcal{C}}{2}-\sqrt{\left(\frac{\mathcal{C}}{2}\right)^2-\mathcal{H}+y_\tau+x-x_\tau}\right]\,.\label{eq:minimizationneg}
\end{align}

Once again, we first consider an initial entangled X-state with $\mathcal{H}>\mathcal{G}$, for which we have already seen that $y_\tau=y$ and $\mathcal{G}\leq x_\tau\leq x$. Use of these in the minimization of Eq.~(\ref{eq:minimizationneg}) leads to an optimization problem that can be trivially solved by noticing that the third term is never negative\footnote{In this case, the third term in the minimization of Eq.~(\ref{eq:minimizationneg}) is an \emph{increasing} function in $x_\tau$, reaching its minimum value when $x_\tau=\mathcal{G}$. The non-negativity of this term is thus implied by $-\mathcal{H}+y+x-\mathcal{G}\leq 0$, which we have already seen to be true.} and the second term is never positive. The negativity formula thus becomes
\begin{equation}\label{eq:contNHG}
\mathfrak{N}_{\mathcal{H}>\mathcal{G}}(\tau)=-\frac{\mathcal{B}}{2}+\sqrt{\left(\frac{\mathcal{B}}{2}\right)^2+x_\tau-\mathcal{G}}\,,
\end{equation}
which reaches the maximum at $x_\tau=x$ (or $\tau=0$), the minimum at $x_\tau=\mathcal{G}$ and is clearly a continuous real function of $x_\tau$ defined on the interval $[\mathcal{G},x]$. Just as occurred in the analysis of concurrence, the continuity of $x_\tau$ in $\tau$ implies that also the negativity is continuous in $\tau$ defined on the interval $[0,1]$. 

For an initial entangled X-state $\bm{\rho}$ such that $\mathcal{G}>\mathcal{H}$, one can derive the following negativity formula, also obviously continuous in $\tau$:
\begin{equation}\label{eq:contNGH}
\mathfrak{N}_{\mathcal{G}>\mathcal{H}}(\tau)=-\frac{\mathcal{C}}{2}+\sqrt{\left(\frac{\mathcal{C}}{2}\right)^2+y_\tau-\mathcal{H}}\,.
\end{equation}
We conclude by noting that Eqs.~(\ref{eq:contNHG}) and~(\ref{eq:contNGH}) allow the determination of the value of $\tau=\tau_n$ that leads to a X-state of negativity $n$. If $\bm{\rho}$ has $\mathcal{H}>\mathcal{G}$, for example, then $\tau_n$ can be obtained by numerically solving the following transcendental equation for $\tau$:
\begin{equation}
n=-\frac{\mathcal{B}}{2}+\sqrt{\left(\frac{\mathcal{B}}{2}\right)^2+\left[\frac{h}{2}\sin(2\tilde{b}_1 \tau) - \sqrt{x} \cos(2 \tilde{b}_1 \tau)\right]^2-\mathcal{G}}\,.
\end{equation}

\subsubsection{Continuity of Relative Entropy of Entanglement}

Since a closed-form for the relative entropy of entanglement of a two-qubit X-state is still unknown (cf.~\ref{app:entmeasures}), we resort to a powerful continuity property of the relative entropy of entanglement\footnote{An analogous continuity property has been demonstrated in~\cite{00Nielsen64301} for the entanglement of formation. It could have been exploited, in Sec.~\ref{sec:contconc}, to establish the continuity of the concurrence in $\tau\in[0,1]$.} derived by Donald and Horodecki~\cite{99Donald257} elaborating on a celebrated inequality due to Fannes~\cite{73Fannes291}.

The general result of Ref.~\cite{99Donald257} implies that, for any pair of two-qubit density matrices $\bm{\rho}_1$ and $\bm{\rho}_2$ such that $\|\bm{\rho}_1-\bm{\rho}_2\|_{tr}\leq 1/3$, the following Fannes-type inequality holds:
\begin{equation}\label{eq:fannesrelent}
|S_{e}(\bm{\rho}_1)-S_{e}(\bm{\rho}_2)|\leq 8\|\bm{\rho}_1-\bm{\rho}_2\|_{tr} -2\|\bm{\rho}_1-\bm{\rho}_2\|_{tr}\log_2{(\|\bm{\rho}_1-\bm{\rho}_2\|_{tr})}\,,
\end{equation}
where $\|\cdot\|_{tr}$ denotes the trace norm (sum of the singular values).

To see that~(\ref{eq:fannesrelent}) implies that the relative entropy of entanglement is continuous in $\tau\in[0,1]$, let $\tau_0$ and $\delta\tau$ be arbitrary real numbers such that $0<\tau_0+\delta \tau\leq 1$ and, for an arbitrary two-qubit X-density matrix $\bm{\rho}$, define
\begin{equation}\label{eq:rho1rho2tau}
\bm{\rho}_1 = \bm{V}_{\tau_0} \bm{\rho} \bm{V}_{\tau_0}^\dagger\quad\mbox{and}\quad \bm{\rho}_2 = \bm{V}_{\tau_0+\delta\tau} \bm{\rho} \bm{V}_{\tau_0+\delta\tau}^\dagger=\bm{V}_{\delta\tau} \bm{\rho}_1 \bm{V}_{\delta\tau}^\dagger\,.
\end{equation}
The strong continuity property of $(\bm{V}_\tau)_{\tau\in[0,1]}$ ensures that
\begin{equation}
\lim_{\delta\tau\to 0}\bm{\rho}_2 = \bm{\rho}_1\quad\mbox{or, equivalently,}\quad \lim_{\delta \tau\to 0}\|\bm{\rho}_1-\bm{\rho}_2\|_{tr}=0\,,
\end{equation}
so, choosing $\bm{\rho}_1$ and $\bm{\rho}_2$ in inequality~(\ref{eq:fannesrelent}) as in Eq.~(\ref{eq:rho1rho2tau}) and taking the limit $\delta\tau\to0$ on both sides of~(\ref{eq:fannesrelent}) yields
\begin{equation}
\lim_{\delta\tau\to 0}|S_{e}(\bm{\rho}_1)-S_{e}(\bm{\rho}_2)|=0\,,
\end{equation}
which is the desired continuity property.

\section{Concluding Remarks}\label{sec:concludingremarks}

In this paper a form of universality property of the set of two-qubit X-states with respect to two-qubit entanglement was established: for every two-qubit state, we have demonstrated that there exists a corresponding two-qubit X-state of same spectrum and entanglement as measured by three different entanglement measures; concurrence, negativity and relative entropy of entanglement. We followed a constructive approach that culminated with the parametrization of a family of unitary transformations that converts between arbitrary two-qubit states and their corresponding X-counterparts, as well as with a semi-analitical characterization of the parameter values that lead to the conservation of concurrence or negativity. As by-products, we have parametrized the set of density matrices of separable, entangled and rank-specific X-states, highlighting its particularly simple geometry. Besides, we have explicitly constructed a set of X-density matrices whose elements are in a one-to-one correspondence with the points of the CP-diagram for generic two-qubit states; a result that can be considered a weaker version of our main universality result.

Regarding the strength of our main universality result, a relevant point is whether two-qubit X-states would exhibit the same universality property when confronted with arbitrary entanglement monotones. An answer to this is intrinsically related to the long-standing open problem: are the maximally entangled mixed states (of a given spectrum) all the same irrespectively of the entanglement monotone that quantifies their entanglement?~\cite{01Audenaert}. As we have seen, the fact that maximally entangled mixed states are indeed the same for the entanglement monotones concurrence (entanglement of formation), negativity and relative entropy of entanglement~\cite{01Verstraete12316} was crucial in establishing our universality result. In order to extend its generality for \emph{arbitrary} entanglement monotones we would need, at least, to guarantee that every possible maximally entangled mixed state is of the X-form. Of course, even if that proves to be true, one would still need to establish the continuity of every entanglement monotone in the sense of Sec.~\ref{sec:continuity}. Although these are certainly two deep open problems, to the best of our knowledge there are no reasons to discard the possibility of this ultimate universality property of two-qubit X-states.

Another point that draws further attention to the family of two-qubit X-states is motivated by the following question: Is that the \emph{only} family of \emph{sparse} density matrices exhibiting the observed universality property? We give a tentative affirmative answer in the case where the sparsity of the X-states is to be preserved (i.e., at least half of the matrix elements equal to zero). Once again, the argument is based on the fact that, for a given set of eigenvalues, maximally entangled mixed states (with respect to concurrence, negativity and relative entropy of entanglement), have all the same X-form~(\ref{eq:MEMS}) up to local unitary transformations~\cite{01Verstraete12316}. Therefore, any ``candidate shape'' must be attainable by conjugating~(\ref{eq:MEMS}) with local unitaries. Now, an obvious way to \emph{preserve} sparsity is to restrict the set of local unitary transformations to the much smaller set of local \emph{permutations}. As simple inspection shows, of the $24$ elements forming the permutation group of dimension $4$, only $3$ (other than the identity) are \emph{local} unitaries.\footnote{Explicitly:
\begin{equation}
%1
\left[\begin{array}{cccc}
\cdot&1&\cdot&\cdot\\
1&\cdot&\cdot&\cdot\\
\cdot&\cdot&\cdot&1\\
\cdot&\cdot&1&\cdot
\end{array}\right]=\Id_2\otimes\bm{\sigma}_1\,,\quad
%2
\left[\begin{array}{cccc}
\cdot&\cdot&1&\cdot\\
\cdot&\cdot&\cdot&1\\
1&\cdot&\cdot&\cdot\\
\cdot&1&\cdot&\cdot
\end{array}\right]=\bm{\sigma}_1\otimes\Id_2\quad\mbox{and}\quad
%3
\left[\begin{array}{cccc}
\cdot&\cdot&\cdot&1\\
\cdot&\cdot&1&\cdot\\
\cdot&1&\cdot&\cdot\\
1&\cdot&\cdot&\cdot
\end{array}\right]=\bm{\sigma}_1\otimes\bm{\sigma}_1\,.\nonumber
\end{equation}
 } Moreover, conjugation of~(\ref{eq:MEMS}) with any one of these three elements results in another X-density matrix. So, assuming local permutations as the only local unitary and sparsity-preserving transformations, we conclude that there are no other families of two-qubit states -- as sparse as the X-states -- with such a universality property.

We conclude by mentioning two possible avenues of future work. First, it is interesting to ask whether there are higher dimensional generalizations of X-states that feature the same universality property. Along these lines, some preliminary numerical work in~\cite{13Hedemann} suggests that qubit-qutrit X-states already fail to be universal with respect to negativity, which raises the prospect to other classes of sparse states. However, the results of~\cite{13Hedemann} are not conclusive and call for analytical investigation. Also worth of note is the fact that a closed formula for the genuine multipartite concurrence for arbitrary $N$-qubit X-states was recently obtained in~\cite{12Rafsanjani062303} and maximized in~\cite{13Agarwal1350043}, giving rise to the set of $N$-qubit X-states of maximal genuine multipartite concurrence for a given linear entropy (the so-called X-MEMS). Given the formal similarity between the X-state formulas for two-qubit concurrence and $N$-qubit genuine multipartite concurrence, it is conceivable that the universality of $N$-qubit X-states with respect to genuine multipartite concurrence can be established in an analogous way as that presented in this work. However, the success of such an analysis depends on a previous demonstration that the X-MEMS are indeed the maximally entangled mixed states over all $N$-qubit states, a fact that is currently unknown~\cite{private}. Second, the universality of two-qubit X-states with respect to the relative entropy of entanglement implies that a closed formula for the relative entropy of two-qubit X-states can be regarded as a closed-formula for any two-qubit state as long as one is able to relate every two-qubit state with its X-counterpart independently of previous knowledge about the relative entropy of entanglement of the input two-qubit state. Obtaining such a relation and a closed formula for the relative entropy of two-qubit X-states are, thus, two possible steps toward the solution of the long-standing open problem of finding a closed-formula for the relative entropy of entanglement of a two-qubit state~\cite{03Eisert}.

\section*{Acknowledgements}
The authors are grateful to J. Combes for a critical reading of an earlier draft of this manuscript. PEMFM thanks T. M. Stace and G. J. Milburn for discussions and for the offered hospitality at the University of Queensland, where part of this work was carried out. This work is supported by the Brazilian Air Force and by the program ``Ci\^encia sem Fronteiras''.

\appendix
\section{Entanglement Measures}\label{app:entmeasures}
Entanglement quantification is a slippery matter. That is mostly so because, in the general framework of mixed states, we do not fully understand what entanglement is. Instead, we collect examples of tasks that could not be performed with a separable state and, vaguely, identify the amount of entanglement of a non-separable state with its corresponding performances in accomplishing such a task. Even more abstractly, well-behaved bounds on these perfomances are commonly regarded as entanglement measures too. As a result, inequivalent ways of measuring mixed state entanglement abound. 

In contrast, in the limited framework of bipartite pure states, much better understanding of entanglement is available. Consequently, an essentially unique entanglement measure emerges~\cite{96Bennett2046,97Popescu3319}: the entropy of entanglement, defined as the von Neumann entropy of either subsystem considered alone. %This measure quantifies, asymptotically, the non-local resources of a large number of copies of a bipartite pure state.

Throughout this paper three different entanglement measures for bipartite \emph{mixed} states are considered, namely: concurrence, relative entropy of entanglement and negativity. In this appendix we briefly review some of their aspects relevant to our purposes. For a much deeper account on the subject of entanglement and its quantification, we refer the reader to \cite{01Horodecki3,09Horodecki865}.

\paragraph{Concurrence}

In~\cite{96Bennett3824}, the problem of determining the amount of Bell states per copy necessary to prepare many copies of a bipartite mixed state $\bm{\rho}$ (resorting only to local operations and classical communications), was taken as a route for quantifying the amount of entanglement of $\bm{\rho}$. The concept of \emph{entanglement of formation} was then materialized as the entanglement measure defined by 
\begin{equation}\label{eq:efdef}
E_f(\bm{\rho})=\min\sum_i p_iE(\ket{\psi_i})\,,
\end{equation}
where $E(\ket{\psi_i})$ is the entropy of entanglement of the pure state $\ket{\psi_i}$, and the minimization is taken over all the pure bipartite state ensembles $\{p_i,\ket{\psi_i}\}$ realizing $\bm{\rho}$.

In a remarkable subsequent development~\cite{98Wootters2245}, Wootters explicitly evaluated the minimization of Eq.~(\ref{eq:efdef}) for $4$-by-$4$ density matrices $\bm{\rho}$, obtaining a closed-formula for the entanglement of formation of two-qubit states:
\begin{equation}\label{eq:woottform}
E_f(\bm{\rho})=\mathfrak{h}\left(\frac{1}{2}\left[1+\sqrt{1-C(\bm{\rho})^2}\right]\right)
\end{equation}
where $\mathfrak{h}(x)\coloneq -x\log_2 x - (1-x)\log_2(1-x)$ and $C(\bm{\rho})$, denominated \emph{concurrence} of $\bm{\rho}$, is given by
\begin{equation}\label{eq:concgen}
C(\bm{\rho})=\max\left[0,\sqrt{\lambda_1}-\sqrt{\lambda_2}-\sqrt{\lambda_3}-\sqrt{\lambda_4}\right]
\end{equation}
with $\lambda_1\geq\lambda_2\geq\lambda_3\geq\lambda_4$ the eigenvalues of $\bm{\rho}(\bm{\sigma}_2\otimes\bm{\sigma}_2)\bm{\rho}^\ast(\bm{\sigma}_2\otimes\bm{\sigma}_2)$, $\bm{\sigma}_2$ being the Pauli-$y$ matrix, and the complex conjugate in $\bm{\rho}^\ast$ taken with respect to $\bm{\rho}$ written in the computational basis.

Inspection of Eq.~(\ref{eq:woottform}) reveals entanglement of formation as a monotonically increasing function of concurrence in the relevant domain $C\in[0,1]$. This simple observation motivated the promotion of $C$ from a mere step in the calculation of $E_f$, to an actual entanglement measure for two-qubit states on its own~\cite{01Wootters27}. 

Remarkable properties of concurrence are: it never increases under local operations or classical communications, remains invariant under local unitary transformations, reduces to the entropy of entanglement when applied to pure states, vanishes if and only if applied to separable states and, quite importantly for our purposes, admits a very simple formula when applied to two-qubit X-states~\cite{06Wang4343}:
\begin{equation}\label{eq:YuEberlyConcurrence}
C(\bm{\rho})=2\max\left[0\,,\,|\rho_{32}|-\sqrt{\rho_{44}\rho_{11}}\,,\,|\rho_{41}|-\sqrt{\rho_{33}\rho_{22}}\right]\,,
\end{equation}
where $\rho_{ij}$ denotes the entry of the X-density matrix $\bm{\rho}$ in the $i$-th row and $j$-th  column.

\paragraph{Relative Entropy of Entanglement}

The notion of quantifying entanglement of an arbitrary quantum state by measuring how distinguishable it is from all separable states of same dimension was introduced in~\cite{97Vedral4452}. Intuitively, the more entangled a state is, the less likely it is to be confused with a separable state after a number of measurements. This idea was materialized in an entanglement measure denominated \emph{relative entropy of entanglement}, for relying upon the quantum relative entropy~\cite{62Umegaki59,00Nielsen} to quantify the ``distance'' between $\bm{\rho}$ and the closest separable state $\bm{\sigma}$, namely:
\begin{equation}\label{eq:relentent}
S_e(\bm{\rho})=\min_{\bm{\sigma}\in \bm{\mathcal{D}}}\tr\left[\bm{\rho}\log\bm{\rho}-\bm{\rho}\log\bm{\sigma}\right]\,,
\end{equation} 
where $\bm{\mathcal{D}}$ represents the set of all separable density operators with the same dimension of $\bm{\rho}$.

Shortly after its inception, Vedral and Plenio showed that the relative entropy of entanglement is a powerful upper bound for yet another entanglement measure~\cite{98Vedral1619}: the \emph{entanglement of distillation}, i.e., the proportion of Bell states that can be distilled from many copies of $\bm{\rho}$ using the optimal purification procedure~\cite{96Bennett722}. In spite of its compelling operational definition, entanglement of distillation proved to be a measure very hard to compute, which reinforced the interest on the bound provided by the relative entropy of entanglement.

Although much more computation-friendly than the entanglement of distillation, evaluation of the relative entropy of entanglement is far from trivial. Numerically, a convergent iterative algorithm~\cite{03Rehacek127904} and an estimator based on semidefinite programming~\cite{10Zinchenko052336} have been proposed, but to date, apart from some cases of high symmetry~\cite{97Vedral2275,98Vedral1619,01Verstraete12316,01Verstraete10327,05Chen2755,02Audenaert032310}, no analytical solution for the minimization problem of Eq.~(\ref{eq:relentent}) has been found; not even in the simplest case of two-qubit density matrices~\cite{03Eisert} (see, however, Refs.~\cite{03Ishizaka060301R,08Miranowicz032310,11Friedland052201}). In the case of two-qubit X-states, the minimization of~(\ref{eq:relentent}) was shown to be equivalent to simultaneously solve a pair of binary non-linear equations whose explicit form depend on the eigenvalues and eigenvectors of $\bm{\rho}$~\cite{10Hui110307}.

Remarkable properties of the relative entropy of entanglement are: it never increases under local operations or classical communications, remains invariant under local unitary transformations, reduces to the entropy of entanglement when applied to pure states, vanishes if and only if applied to separable states, and is also meaningful for the quantification of multipartite entanglement.

\paragraph{Negativity}
Unlike the other measures discussed so far, negativity does not originate from a clear physical context, but rather from the technical desire for an easy-to-compute entanglement measure. It is motivated by the observation, due to Peres~\cite{96Peres1413}, that partial transposition of a separable bipartite density matrix yields another positive semi-definite matrix, which suggests some sort of relationship between entanglement and negative eigenvalues of the partial transpose of a bipartite density matrix. In~\cite{02Vidal032314}, this relationship was materialized in an entanglement measurement defined as the absolute value of the sum of the negative eigenvalues of the partially transposed density matrix or, equivalently,
\begin{equation}\label{eq:negdef}
N(\bm{\rho})=\frac{1}{2}\sum_i|\lambda_i|-\lambda_i=\frac{\|\bm{\rho}^{{\sf T}_A}\|_{tr}-1}{2}\,,
\end{equation}
where $\lambda_i$ are the eigenvalues of a partial transpose of $\bm{\rho}$, here denoted by $\bm{\rho}^{{\sf T}_A}$. The second equation above follows trivially from the normalization of $\bm{\rho}$ and from the definition of the trace norm of a matrix as the sum of its singular values. Of course, in the case of Hermitian matrices such as $\bm{\rho}^{{\sf T}_A}$, the singular values are simply the absolute values of the eigenvalues.

The establishment of Eq.~(\ref{eq:negdef}) as a meaningful entanglement measure is greatly due to the facts that negativity remains invariant under local unitary transformations and never increases under local operations or classical communications~\cite{02Vidal032314}. In addition, a logarithmic variation of negativity provides a useful upper bound for entanglement of distillation. Nevertheless, negativity does not reduce to the entropy of entanglement when applied to pure states, and it can vanish for entangled states because, in higher dimensions, there are entangled states with positive partial transpose~\cite{96Horodecki1,01Horodecki45}.

In the case of two-qubit states, $\bm{\rho}^{{\sf T}_A}$ is known to have at most a single negative eigenvalue~\cite{98Sanpera826}, hence Eq.~(\ref{eq:negdef}) simplifies to
\begin{equation}
N(\bm{\rho})=-\min\left[0,\Lambda_{min}(\bm{\rho}^{{\sf T}_A})\right]\,,
\end{equation}
where $\Lambda_{min}$ denotes the operator that extracts the smallest eigenvalue of its argument. For $\bm{\rho}$ a two-qubit X-state, explicit calculation of the eigenvalues of $\bm{\rho}^{{\sf T}_A}$ yields the following formula for the negativity of two-qubit X-states:
\begin{equation}\label{eq:negdefX}
N(\bm{\rho})=-\min\left[0,\frac{\rho_{22}+\rho_{33}}{2}-\sqrt{\left(\frac{\rho_{22}}{2}\right)^2+\left(\frac{\rho_{33}}{2}\right)^2+|\rho_{41}|^2},\frac{\rho_{11}+\rho_{44}}{2}-\sqrt{\left(\frac{\rho_{11}}{2}\right)^2+\left(\frac{\rho_{44}}{2}\right)^2+|\rho_{32}|^2}\right]\,,
\end{equation}
where $\rho_{ij}$ denotes the entry of the X-density matrix $\bm{\rho}$ in the $i$-th row and $j$-th  column.

\section{Concurrence versus Purity diagram for two-qubit states}\label{app:cpdiagrams}

Concurrence versus purity diagrams provide a simple and visual way to witness how quantum state mixedness limits quantum state entanglement and vice-versa. The CP-diagram for two-qubit states has well known boundaries, meaning that for a given purity $p$, the maximal achievable two-qubit concurrence $c_{max}$ is known. Conversely, for any given concurrence, the smallest reachable two-qubit purity is also known. In this appendix we briefly review some key results that enable the analytical characterization of these boundaries.

Given that the maximally mixed two-qubit state $\Id_4/4$ has purity $p=1/4$, and that pure states have maximal purity ($p=1$), two-qubit CP-diagrams are defined in the purity domain $[1/4,1]$. However, as shown by \.{Z}yczkowski~\cite{98Zyczkowski883}, every two-qubit state with $p\leq 1/3$ is separable, hence $c_{max}(p)=0$ for $p\in[1/4,1/3]$.

For $p>1/3$, though, Munro \emph{et al.}~\cite{01Munro30302} showed that, up to local unitary transformations, two-qubit states with maximal concurrence for a fixed purity $p$ are of the form:
\begin{equation}
\bm{\rho}=\left[
\begin{array}{cccc}
\mathfrak{g}(\gamma) & \cdot & \cdot & \gamma/2\\
\cdot & 1 - 2g(\gamma) & \cdot & \cdot \\
\cdot & \cdot & \cdot & \cdot \\
\gamma/2 & \cdot & \cdot & \mathfrak{g}(\gamma)
\end{array}
\right]\quad\mbox{where}\quad \mathfrak{g}(\gamma)=
\left\{\begin{array}{cc}
\gamma/2 & \mbox{if}\quad \gamma \geq 2/3\\
1/3 & \mbox{if}\quad \gamma < 2/3\\
\end{array}\right.\,,
\end{equation}
from which we can promptly compute $c_{max}=\gamma$ [cf. Eq.~(\ref{eq:YuEberlyConcurrence})] and $p = 1 - 4 \mathfrak{g}(\gamma) + 6 [\mathfrak{g}(\gamma)]^2+\gamma^2/2$. With some simple algebra, we can eliminate the parameter $\gamma$ and write $c_{max}$ as a function of $p$:
\begin{equation}
c_{max}(p)=\left\{\begin{array}{cc}
u(p) &\mbox{if}\quad \frac{5}{9}\leq p \leq 1\\[2mm]
v(p) &\mbox{if}\quad \frac{1}{3}\leq p \leq \frac{5}{9}
\end{array}\right.\,,
\end{equation}
with
\begin{equation}
u(p)=\frac{1}{2}+\frac{1}{2}\sqrt{2p-1}\quad\mbox{and}\quad v(p)=\sqrt{2p-\frac{2}{3}}\,.
\end{equation}

A plot of $c_{max}(p)$ is shown in Fig.~\ref{fig:cpboundaries} and gives the upper boundary of the two-qubit CP-diagram. As it turns out, there are more than sufficient two-qubit states to access every internal point of this diagram. In fact, as shown in Sec.~\ref{sec:minimalset}, there are more than sufficient two-qubit X-states to visit every point $(p,c)$ with $p\in[1/4,1]$ and $c\in[0,c_{max}(p)]$.

\begin{figure}[h]
\centering
\includegraphics{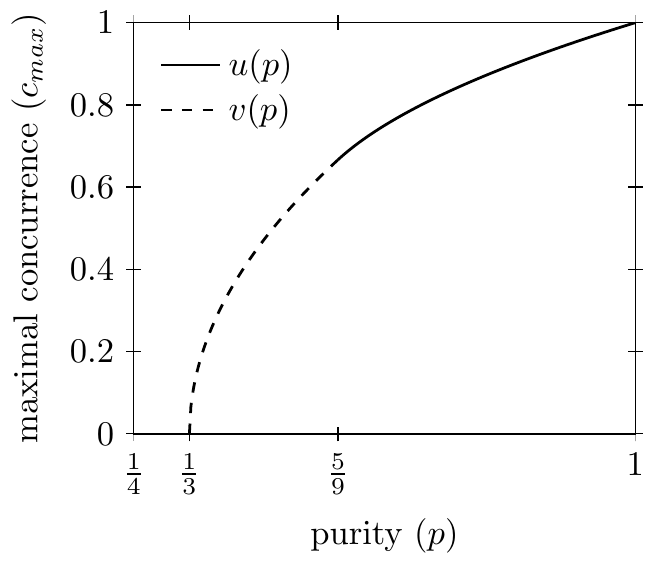}
  \caption{Boundary of the two-qubit CP-diagram.}\label{fig:cpboundaries}
\end{figure}

\section{Density Matrices of same Purity and Rank are not always unitarily related}\label{app:purrankunit}

\begin{lemma}
Let $\bm{\rho}_1$ and $\bm{\rho}_2$ be density matrices of arbitrary dimension $d$. If there exists a unitary transformation $\bm{U}\in SU(d)$ such that $\bm{\rho}_2=\bm{U} \bm{\rho}_1 \bm{U}^\dagger$, then $\tr[\bm{\rho}_2^2]=\tr[\bm{\rho}_1^2]$ and $\rank[\bm{\rho}_2]=\rank[\bm{\rho}_1]$. However, the converse is generally not true, unless $\rank[\bm{\rho}_1]=\rank[\bm{\rho}_2]\leq 2$.
\end{lemma}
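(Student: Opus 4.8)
The plan is to funnel all three assertions through the single structural fact that two density matrices satisfy $\bm{\rho}_2=\bm{U}\bm{\rho}_1\bm{U}^\dagger$ for some unitary $\bm{U}$ if and only if they possess the same spectrum, i.e. the same multiset of eigenvalues. One direction is immediate, since similar matrices share a characteristic polynomial. For the converse I would diagonalize the two Hermitian matrices, $\bm{\rho}_j=\bm{W}_j \bm{\Lambda} \bm{W}_j^\dagger$ with a common eigenvalue matrix $\bm{\Lambda}$, and take $\bm{U}=\bm{W}_2\bm{W}_1^\dagger$; rescaling $\bm{U}$ by a global phase $\mathrm{e}^{i\alpha}$ (which leaves the conjugation untouched) enforces $\det\bm{U}=1$, placing it in $SU(d)$ as the statement demands.

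First I would dispatch the forward implication, which needs no spectral machinery: cyclicity of the trace together with $\bm{U}^\dagger\bm{U}=\Id$ gives $\tr[\bm{\rho}_2^2]=\tr[\bm{U}\bm{\rho}_1^2\bm{U}^\dagger]=\tr[\bm{\rho}_1^2]$, while invertibility of $\bm{U}$ forces $\rank[\bm{\rho}_2]=\rank[\bm{\rho}_1]$. Both conclusions also drop out of spectrum-preservation, since purity is $\sum_i\lambda_i^2$ and rank is the number of nonzero $\lambda_i$.

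The substantive content is a degree-of-freedom count on the nonzero eigenvalues. A rank-$r$ density matrix carries $r$ positive eigenvalues subject to exactly two scalar constraints, normalization $\sum_{i=1}^{r}\lambda_i=1$ and fixed purity $\sum_{i=1}^{r}\lambda_i^2=P$, leaving an $(r-2)$-parameter family. For $r\leq 2$ this count is nonpositive and the spectrum is rigid: at $r=1$ normalization alone forces the lone eigenvalue to be $1$; at $r=2$ the two equations reduce to the quadratic $2\lambda^2-2\lambda+(1-P)=0$, whose roots are the unique pair $\{(1\pm\sqrt{2P-1})/2\}$ (real, positive, and summing to $1$ for the admissible range $P\in[1/2,1[$). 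Hence for $r\leq 2$ the pair $(\rank,\,\mathrm{purity})$ determines the entire spectrum, so equality of rank and purity implies equality of spectra and therefore $SU(d)$-equivalence, establishing the ``unless $\rank\leq 2$'' clause.

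For $r\geq 3$, by contrast, the two constraints cut out a genuine curve in the eigenvalue simplex (a positive-dimensional slice of the probability simplex by a sphere), so distinct unordered eigenvalue multisets sharing the same rank and purity coexist; the corresponding diagonal density matrices then agree in rank and purity yet differ in spectrum, and so cannot be unitarily related. An explicit rank-3 instance is already on hand, namely the three matrices of equal purity, rank and concurrence exhibited in Sec.~\ref{sec:minimalset}. I expect the only delicate point to be verifying that the separating spectra are \emph{bona fide} density-matrix spectra (all $\lambda_i>0$ and genuinely distinct as multisets, not merely reorderings), rather than any deep obstacle; once positivity along the constraint curve is confirmed the counterexample is complete.
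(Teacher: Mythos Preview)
Your argument is correct and follows the same structural route as the paper: reduce unitary equivalence to spectrum equality, then ask whether the pair $(\rank,\mathrm{purity})$ pins down the spectrum. For rank $\leq 2$ your direct quadratic solution is the paper's argument in cleaner clothing (the paper instead factorizes $\tr[\bm{\Lambda}_1^2-\bm{\Lambda}_2^2]=(a-p)(a+p-1)$ and uses the ordering convention $a,p\geq 1/2$ to kill the spurious root $a+p=1$). The rank-$\geq 3$ treatment differs in execution: the paper parametrizes rank-3 eigenvalues explicitly as $\{(a\pm b)/2,\,1-a\}$, shows that equal purity cuts out an ellipse in the $(a,b)$-plane, and verifies geometrically that arcs of this ellipse lie inside the admissible region~(\ref{eq:validregion}), producing a continuum of non-isospectral pairs; you instead invoke a degree-of-freedom count and point to the explicit rank-3 triple already displayed in Sec.~\ref{sec:minimalset}. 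Both are valid. The paper's version is self-contained and pictorial, while yours is quicker but leans on that earlier example --- which is not circular, since those three matrices and their distinct spectra are verified by direct computation and do not depend on the lemma. Your acknowledged ``delicate point'' (that the sphere-simplex intersection is genuinely positive-dimensional with all coordinates strictly positive) is exactly what the paper's ellipse picture settles explicitly.
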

\begin{proof}
The direct implication is obvious. For the converse, note that the equivalence between the purities of $\bm{\rho}_1$ and $\bm{\rho}_2$ can be written as
\begin{equation}\label{eq:traceequation}
\tr[\bm{\Lambda}_1^2-\bm{\Lambda}_2^2]=0\,,
\end{equation}
where $\bm{\Lambda}_{i}$ is the matrix of the eigenvalues of $\bm{\rho}_i$ sorted in non-increasing order. Since $\bm{\rho}_i$ are density matrices, we can obtain $\bm{\Lambda}_i$ via a unitary conjugation, namely, $\bm{\rho}_i=\bm{U}_i\bm{\Lambda}_i\bm{U}_i^\dagger$. The key point is this: if Eq.~(\ref{eq:traceequation}) generally implied $\bm{\Lambda}_2=\bm{\Lambda}_1$, then we could write $\bm{U}_2^\dagger\bm{\rho}_2 \bm{U}_2 = \bm{U}_1^\dagger\bm{\rho}_1 \bm{U}_1$ and conclude that $\bm{\rho}_2$ and $\bm{\rho}_1$ are always related by unitary conjugation with $\bm{U}=\bm{U}_2 \bm{U}_1^\dagger$. However, as we now show, the implication
\begin{equation}
\tr[\bm{\Lambda}_1^2-\bm{\Lambda}_2^2]=0 \Rightarrow \bm{\Lambda}_1 = \bm{\Lambda}_2\,,
\end{equation}
only holds if $\bm{\Lambda}_1$ and $\bm{\Lambda}_2$ have at most 2 non-zero entries; that is, $\rank[\bm{\rho}_1]=\rank[\bm{\rho}_2]\leq 2$.

The implication is obviously valid for rank-1. To see that it holds for rank-2 consider the first $2\times 2$ diagonal blocks of $\bm{\Lambda}_1$ and $\bm{\Lambda}_2$, which must be of the form
\begin{equation}
\tilde{\bm{\Lambda}}_1=\left[\begin{array}{cc}
a & \cdot \\
\cdot & 1-a
\end{array}\right]\quad\mbox{and}\quad
\tilde{\bm{\Lambda}}_2=\left[\begin{array}{cc}
p & \cdot \\
\cdot & 1-p
\end{array}\right]\,,
\end{equation}
with $\frac{1}{2}\leq a, p < 1$, where the lower bound follows from the adopted convention of ordering the eigenvalues in $\bm{\Lambda}_i$ in non-increasing order. The remaining entries of $\bm{\Lambda}_1$ and $\bm{\Lambda}_2$ all vanish and can thus be disregarded. It is straightforward to show that 
\begin{equation}
\tr[\bm{\Lambda}_1^2-\bm{\Lambda}_2^2]= (a-p) (a+p-1)\,.
\end{equation}
Given that $a, p \geq \frac{1}{2}$, the expression above will only vanish if $a=p$, or equivalently, $\bm{\Lambda}_1=\bm{\Lambda}_2$, hence proving the claim for the case of rank-2.

Consider now the first $3\times 3$ diagonal blocks of $\bm{\Lambda}_1$ and $\bm{\Lambda}_2$ when $\rank[\bm{\rho}_1]=\rank[\bm{\rho}_2] = 3$,
\begin{equation}
\tilde{\bm{\Lambda}}_1=\frac{1}{2}\left[\begin{array}{ccc}
a+b & \cdot & \cdot\\
\cdot & a-b & \cdot\\
\cdot & \cdot & 2-2a
\end{array}\right]\quad\mbox{and}\quad
\tilde{\bm{\Lambda}}_2=\frac{1}{2}\left[\begin{array}{ccc}
p+q & \cdot & \cdot \\
\cdot & p-q & \cdot \\
\cdot & \cdot & 2-2p
\end{array}\right]\,.
\end{equation}
The adopted convention of non-increasing order of the eigenvalues requires that $0<a+b<2$, $0<a-b\leq a+b$, $0< 2-2a \leq a-b$ and similarly for $p$ and $q$. Some simple analysis shows that these are equivalent to 
\begin{equation}\label{eq:validregion}
\frac{2}{3}\leq a < 1\,,\quad 0\leq b \leq 3a-2\,,\quad \frac{2}{3}\leq p < 1 \quad\mbox{and}\quad 0\leq q \leq 3p-2\,.
\end{equation}
Moreover, in terms of $a$, $b$, $p$ and $q$, the equation $\tr[\bm{\Lambda}_1^2-\bm{\Lambda}_2^2]=0$ takes the form
\begin{equation}
\left(a-\frac{2}{3}\right)^2+\frac{b^2}{3}=\left(p-\frac{2}{3}\right)^2+\frac{q^2}{3}\,,
\end{equation}
which, for any fixed values of $p\neq2/3$ and $q\neq 0$, corresponds to an ellipse in the $ab$ plane, as shown in Fig~\ref{fig:ellipselemma}. Note that for each choice of $(p,q)$ in the figure, there are many points of the ellipse other than $a=p$ and $b=q$ sitting in the shaded region defined by Eq.~(\ref{eq:validregion}). This implies that already in the rank-3 case there are many valid solutions for $\tr[\bm{\Lambda}_1^2-\bm{\Lambda}_2^2]=0$ other than $\bm{\Lambda}_1=\bm{\Lambda}_2$.

\begin{figure}[h]
\centering
\includegraphics{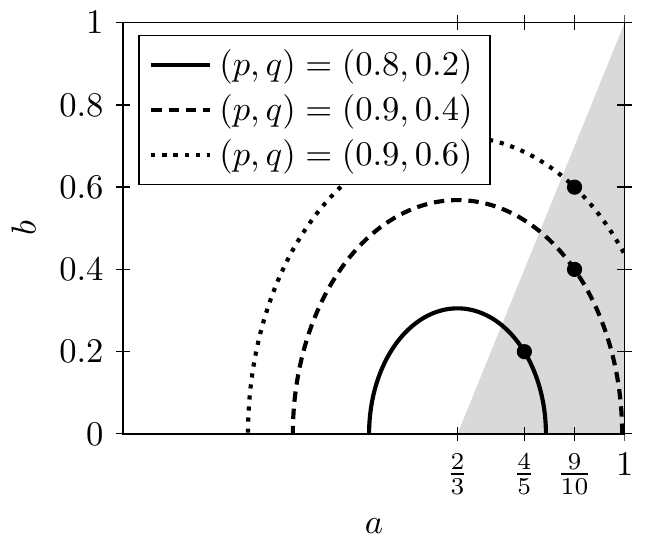}
  \caption{Geometric visualization that two rank-3 density matrices of same purity are generally not related by a unitary transformation. The shading represents the region of the $ab$ parameter space occupied by rank-3 density matrices $\bm{\rho}_1$, whose eigenvalues in non-increasing order are $(a+b)/2$, $(a-b)/2$ and $(1-a)$ [cf. Eq.~(\ref{eq:validregion})]. The points of each arc of ellipse contained in the shading represent the location of rank-3 states $\bm{\rho}_1$ with the same purity of rank-3 (reference) states $\bm{\rho}_{2}$, whose eigenvalues in non-increasing order are $(p+q)/2$, $(p-q)/2$ and $(1-p)$. Of all the points of a given arc, only the highlighted ones correspond to states $\bm{\rho}_1$ unitarily related to $\bm{\rho}_2$.}\label{fig:ellipselemma}
\end{figure}
\begin{flushright}
$\Box$
\end{flushright}
\end{proof}

\section{Proof of Theorem~\ref{thm:existenceunitary}}\label{app:proofthm}
The proof is by construction. Start by making $b_2=\mu$ and $b_4=\nu$ in Eq.~(\ref{eq:xyprime}). After some simple algebra,  we find that
\begin{equation}
x^\prime=\left[\frac{h}{2}\sin (2 b_1)-\sqrt{x}\cos (2b_1)\right]^2\qquad\mbox{and}\qquad y^\prime=\left[\frac{g}{2}\sin (2 b_3)-\sqrt{y}\cos (2b_3)\right]^2\,.\label{eq:xyprime_betas2}
\end{equation}
Clearly, the requirement $x^\prime=x$ (case $\mathcal{G}>\mathcal{H}$) is satisfied by setting $b_1=0$, whereas the requirement $y^\prime=y$ (case $\mathcal{H}>\mathcal{G}$) is satisfied by setting $b_3=0$. Now, only one free parameter of $\bm{V}$ remains in each case. Throughout, we shall consider the case $\mathcal{H}>\mathcal{G}$, where we must find $b_1\in[0,2\pi]$ such that $x^\prime=\mathcal{G}$ for every possible value of $x$, $h$ and $\mathcal{G}$. We note that in the case $\mathcal{G}>\mathcal{H}$, where we must find $b_3\in[0,2\pi]$ such that $y^\prime=\mathcal{H}$ for every possible value of $y$, $g$ and $\mathcal{H}$, a solution can be obtained with the exact same steps to be presented below, but under the replacements: $b_1\rightarrow b_3$, $x\rightarrow y$, $h\rightarrow{g}$ and $\mathcal{G}\rightarrow\mathcal{H}$. 

In order to obtain $b_1$ in the case $\mathcal{H}>\mathcal{G}$, consider first the equation $x^\prime=\mathcal{G}$ in the following form:
\begin{equation}\label{eq:xprimeeqcalG}
\left(\frac{h}{2}\right)^2\sin^2(2b_1) + x\cos^2 (2b_1) - \mathcal{G} = \frac{h}{2}\sqrt{x}\sin (4 b_1)
\end{equation}
If $h=0$, it is immediate that $b_1$ is given by
\begin{equation}
\cos (2b_1)=\pm\sqrt{\frac{\mathcal{G}}{x}}\,.
\end{equation}
Note that the requirement of $\bm{\rho}$ to be entangled implies that $x>\mathcal{G}$, in such a way that $b_1$ is well-defined by the above equation (with any choice of sign) for every entangled X-state with $h=0$ (and $\mathcal{H}>\mathcal{G}$).

In order to solve Eq.~(\ref{eq:xprimeeqcalG}) when $h\neq 0$, we perform the following substitution of variables: 
\begin{equation}\label{eq:chgvar_rtozandsigma}
\sin^2 (2b_1)\equiv \mathfrak{z}\,,\quad \cos^2 (2b_1) = 1-\mathfrak{z}\quad\mbox{and}\quad
\sin (4b_1) = \mathfrak{s}\sqrt{1-(1-2\mathfrak{z})^2}\,,
\end{equation}
where $\mathfrak{s}$ is a binary variable that takes values on the set $\{-1,+1\}$. In terms of $\mathfrak{z}$ and $\mathfrak{s}$, Eq.~(\ref{eq:xprimeeqcalG}) becomes
\begin{equation}\label{eq:eqoriginal22}
\mathfrak{z}\left[\left(\frac{h}{2}\right)^2-x\right]+x-\mathcal{G}=\mathfrak{s}\frac{h}{2}\sqrt{x}\sqrt{1-(1-2\mathfrak{z})^2}\,.
\end{equation}
This is now squared to eliminate $\mathfrak{s}$ and produce an equation for $\mathfrak{z}$ only:
\begin{equation}\label{eq:genquadeqfneq0}
\mathfrak{z}^2\mathcal{X}_+^2-2\mathfrak{z}\left(x\mathcal{X}_++\mathcal{G}\mathcal{X}_-\right)+(x-\mathcal{G})^2=0\,,
\end{equation}
where we have defined $\mathcal{X}_\pm \coloneq \left(\frac{h}{2}\right)^2\pm x$. Note that the condition $\mathcal{X}_+=0$ occurs if and only if $x=h=0$, hence the solution of Eq.~(\ref{eq:genquadeqfneq0}) for any set of X-state's parameters consistent with $h\neq 0$ is given by
\begin{equation}\label{eq:z1pm}
\mathfrak{z}_\pm=\frac{x\mathcal{X}_+ + \mathcal{G}\mathcal{X}_-\pm |h|\sqrt{x\mathcal{G}(\mathcal{X}_+-\mathcal{G})}}{\mathcal{X}_+^2}\,.
\end{equation}
The requirement of entanglement of $\bm{\rho}$ (i.e., $x>\mathcal{G}$) implies that $\mathcal{X}_+>\mathcal{G}$, which, in turn, guarantees that $\mathfrak{z}_\pm\in\mathbb{R}$. Moreover, as we now show, for any set of X-state's parameters consistent with $h\neq0$, we have $\mathfrak{z}_+\in\;]0,1]$ and $\mathfrak{z}_-\in\;]0,1[$. This is established with the standard inequality between the arithmetic and geometric means (AM-GM inequality):
\begin{align}
x\mathcal{X}_++\mathcal{G}\mathcal{X}_-&=\frac{1}{2}\left[2x(\mathcal{X}_+-\mathcal{G})+\frac{\mathcal{G}h^2}{2}\right]\geq |h|\sqrt{x\mathcal{G}(\mathcal{X}_+-\mathcal{G})}\,,\label{ineq:amgm1}\\
\mathcal{X}_+^2-x\mathcal{X}_+-\mathcal{G}\mathcal{X}_-&=\frac{1}{2}\left[\frac{h^2(\mathcal{X}_+-\mathcal{G})}{2}+2x\mathcal{G}\right]\geq |h|\sqrt{x\mathcal{G}(\mathcal{X}_+-\mathcal{G})}\geq -|h|\sqrt{x\mathcal{G}(\mathcal{X}_+-\mathcal{G})}\,.\label{ineq:amgm2}
\end{align}
Clearly, inequality (\ref{ineq:amgm1}) implies that $\mathfrak{z}_-\geq 0$, whereas the first and second inequalities in (\ref{ineq:amgm2}) imply $\mathfrak{z}_+\leq 1$ and $\mathfrak{z}_-\leq 1$, respectively. Since the AM-GM inequality is saturated if and only if the two summands in the arithmetic mean are equal, it is easy to show that $\mathfrak{z}_-=0$ if and only if $x=\mathcal{G}$; that $\mathfrak{z}_+=1$ if and only if $\mathcal{G}=(h/2)^2$; and that $\mathfrak{z}_-=1$ if and only if $\mathcal{G}=(h/2)^2$ and $hx=0$. As $x=\mathcal{G}$ contradicts the hypothesis of $\bm{\rho}$ to be entangled, and $hx=0$ contradicts either $h\neq 0$ or $x>\mathcal{G}$, the only possible saturation is $\mathfrak{z}_+=1$.

So far we have determined and characterized the solutions of Eq.~(\ref{eq:genquadeqfneq0}) for any set of X-state parameters consistent with $h\neq 0$. We now focus on building a solution for the original (pre-squared) equation~(\ref{eq:eqoriginal22}). Naturally, for each set of parameters consistent with $h\neq 0$, the candidate solutions are of the form $\mathfrak{z}=\mathfrak{z}_\pm$ along with some choice of signal $\mathfrak{s}$. As we now show, there is always a choice of signal $\mathfrak{s}=\tilde{\mathfrak{s}}$ that solves Eq.~(\ref{eq:eqoriginal22}) for $\mathfrak{z}=\mathfrak{z}_-$. In order to determine $\tilde{\mathfrak{s}}$, we plug  $\mathfrak{z}=\mathfrak{z}_-$ into Eq.~(\ref{eq:eqoriginal22}) and take the sign function\footnote{The sign function of a real number $\varsigma$ is defined as: 
$
\sgn[\varsigma]\coloneq\left\{\begin{array}{cc}
-1 & \mbox{if } \varsigma<0\,,\\
0 & \mbox{if } \varsigma = 0\,,\\
1 & \mbox{if } \varsigma >0\,.
\end{array}
\right.
$
} on both sides to get
\begin{equation}
\sgn\left[\mathfrak{z}_-\mathcal{X}_-+x-\mathcal{G}\right]=\tilde{\mathfrak{s}}\sgn\left[\frac{h}{2}\sqrt{x}\sqrt{1-(1-2\mathfrak{z}_-)^2}\right]\,.
\end{equation}
Since $x>0$ and $\mathfrak{z}_-\neq 1$ for every valid choice of X-state parameters consistent with $h\neq 0$, the sign function on the rhs can be replaced with $\sgn[h]$. Moreover, the fact that the square of the above identity is satisfied (by construction), guarantees that the sign function on the lhs never vanishes as long as $h\neq 0$. As a consequence, the $\pm 1$ value of $\tilde{\mathfrak{s}}$ is chosen according to
\begin{equation}
\tilde{\mathfrak{s}}=\sgn[h]\sgn\left[\mathfrak{z}_-\mathcal{X}_-+x-\mathcal{G}\right]\,.
\end{equation}
Finally, substituting $\mathfrak{z}=\mathfrak{z}_-$ and $\mathfrak{s}=\tilde{\mathfrak{s}}$ into Eq.~(\ref{eq:chgvar_rtozandsigma}), gives\footnote{We note that the condition $\cos (4b_1) = 1-2\mathfrak{z}_\pm$ is equivalent to $\sin^2 (2b_1) = \mathfrak{z}_\pm$ via basic trigonometry. }
\begin{equation}
\cos (4b_1) = 1-2\mathfrak{z}_-\quad\mbox{and}\quad\sin (4 b_1)=\tilde{\mathfrak{s}}\sqrt{1-(1-2\mathfrak{z}_-)^2}\,.
\end{equation}
Applying the trigonometric half-angle formulas %$\sin(\alpha/2)=\pm\sqrt{1/2-1/2\cos\alpha}$, $\cos(\alpha/2)=\pm\sqrt{1/2+1/2\cos\alpha}$ and choosing suitable signs,
 we can also write
\begin{equation}
\cos (2b_1) = \tilde{\mathfrak{s}}\sqrt{1-\mathfrak{z}_-}\quad\mbox{and}\quad\sin (2 b_1)=\sqrt{\mathfrak{z}_-}\,,
\end{equation}
and
\begin{equation}\label{eq:solfneq0}
\cos b_1 = \sqrt{\frac{1}{2}+\frac{\tilde{\mathfrak{s}}}{2}\sqrt{1-\mathfrak{z}_-}} \quad\mbox{and}\quad\sin b_1=\sqrt{\frac{1}{2}-\frac{\tilde{\mathfrak{s}}}{2}\sqrt{1-\mathfrak{z}_-}}\,.
\end{equation}
Since we generally have $\mathfrak{z}_- \in\;]0,1[$ and $\tilde{\mathfrak{s}}\in\{-1,+1\}$, the equations above well-define the value of the unitary parameter $b_1\in[0,\pi/2]$ for an arbitrary entangled X-state with $\mathcal{H}>\mathcal{G}$ and $h\neq 0$, which concludes the proof.

\end{document}